\documentclass{llncs}
\usepackage{amssymb}
\usepackage{enumerate}
\usepackage[english]{babel}
\usepackage{amsmath}
\usepackage{algorithmic}

\usepackage[T1]{fontenc}

\usepackage{comment}
\usepackage{graphicx,amsmath,amssymb}
\spnewtheorem{claimu}[theorem]{Claim}{\bfseries}{\itshape}

\pagestyle{plain}
\begin{document}
\title{Parameterized Complexity of Edge Interdiction Problems}

\author{
Jiong Guo\thanks{Supported by the DFG Excellence Cluster MMCI.}
\and
Yash Raj Shrestha\thanks{Supported by the DFG research project DARE GU~1023/1.}
}
 
\institute{
Universit\"at des Saarlandes, \\
Campus E 1.7, 
D-66123 Saarbr\"ucken, Germany\\
\email{\{jguo, yashraj\}@mmci.uni-saarland.de}
}

\maketitle
\begin{abstract}

We study the parameterized complexity of graph interdiction problems. For an optimization problem on graphs, one can formulate an interdiction
problem as a game consisting of two players, namely, an interdictor and an evader, who compete on an objective with opposing interests. In edge interdiction problems, every
edge of the input graph has an interdiction cost associated with it and the interdictor interdicts the graph by modifying the edges in the graph, and the number of such modifications
is constrained by the interdictor's budget. The evader then solves the given optimization problem on the modified graph. The action of the interdictor must impede the evader as much as
possible. 
 
~~~~~We focus on edge interdiction problems related to minimum spanning tree, maximum matching and maximum flow problems. These problems arise in different real world scenarios. 
We derive several fixed-parameter tractability and W[1]-hardness results for these interdiction
problems with respect to various parameters. Hereby, we show close relation between edge interdiction problems and partial covering problems on bipartite graphs. 
\end{abstract}

 \section{Introduction}
 Given an optimization problem on graphs, the corresponding interdiction problem can be formulated as a game consisting of two players, 
 namely, an interdictor and an evader, who compete on an objective with opposite interests. In edge interdiction problems, 
 every edge of the input graph has an interdiction cost associated with it and the interdictor interdicts the network by modifying edges in the graph, and the 
 number of such modifications are constrained by the interdictor's budget. The evader then solves the given optimization problem on the modified graph. 
 The action of the interdictor must impede the evader as
 much as possible. 

 In this paper, we focus on edge interdiction problems related to minimum spanning tree, maximum matching and maximum flow problems. 
 These interdiction problems arise in different real world scenarios, e.g., detecting drug smuggling \cite{Wood93,Wood95}, military planning \cite{ghare71}, analyzing power grid
 vulnerability \cite{salmeron_09} and hospital infection control \cite{n87}.

 A {\it spanning tree} of a connected graph $G$ is a tree composed of all the vertices and some of the edges of $G$. The {\it minimum spanning tree (MST) problem} is to find
 a spanning tree whose total weight is minimum. Let $\eta(G)$ be the weight of MST of G. 
 A {\it matching} in a graph is a set of edges such that no two edges share an endpoint. Let $\nu(G)$ be the weight of maximum matching in $G$.
 We say a matching $M$ {\it saturate} a set $U \subseteq V$ if for each vertex $u \in U$, there exists one edge in $M$ with $u$ as its endpoint.
 For $G=(V,E)$, $G -I$ is the graph resulting by
  removing a set of edges $I$ from $G$. A set of edges $M$ of $G=(V,E)$ is called an {\it edge dominating set} if every edge of $E \setminus M$ is adjacent to at least one edge in $M$.
 An {\it independent edge dominating set} is an edge dominating set in which no two edges are adjacent.
 A {\it minimum maximal matching} in a graph $G$ is a maximal matching of
 the minimum size, denoted by $\lambda (G)$. An independent edge dominating set is a minimum maximal matching \cite{forcade}.

We start with the introduction of $b$-{\sc Most Vital Edges in MST ($b$-MVE)} which is defined in the literature follows: 

 \begin{description}\itemsep-1pt
  \item {\bfseries Input:}  An edge-weighted graph $G=(V,E)$ with weight function $w: E \rightarrow \mathbb{Z}_{\geq 0}$, two positive integers $b$ and $r$.
  \item {\bfseries Output:}  A subset $I \subseteq E$ with $|I| \leq b$ such that  $\eta(G - I) \geq r$.  
 \end{description}

 Frederickson and Solis-Oba \cite{FredericksonS99} proved that $b$-{\sc MVE} is NP-hard even 
 if the weights of the edges are either 0 or 1. They also gave an $ \Omega(1\setminus \log k)$-approximation algorithm for $b$-{\sc MVE}. This problem has also been studied from view point of exact
algorithms and randomized algorithms \cite {Liang01,LiangS97}. 
 
 The {\sc Maximum Matching Edge Interdiction} (MMEI) problem, introduced by Zenklusen \cite{Zenklusen10a}, is defined as follows: 
  
 \begin{description}\itemsep-1pt
  \item {\bfseries Input:} A weighted graph $G=(V,E)$ with weight function $w: E \rightarrow \mathbb{Z}_{\geq 0}$, an interdiction cost function
                 $c : E \rightarrow \mathbb{Z}_{\geq 1}$, and two positive integers $b$ and $m$.  
  \item {\bfseries Output:} A subset $I \subseteq E$ with $c(I) \leq b$ such that $\nu(G - I) \leq m$.  
 \end{description}
MMEI is NP-hard on bipartite graphs, even with unit edge weights and unit interdiction costs \cite{Zenklusen10a}. Zenklusen \cite{Zenklusen10a} introduced a constant
factor approximation algorithm for MMEI on graphs with unit edge weights. Recently, Dinitz and Gupta provided a constant-factor approximation for a generalization of matching
 interdiction called {\it packing interdiction}
\cite{gupta}. Zenklusen \cite{Zenklusen10a} also showed that MMEI is solvable in pseudo-polynomial time on graphs with bounded treewidth. Pan and Schild \cite{pan} proved 
that weighted MMEI remains NP-hard even
on planar graphs and gave a pseudo-polynomial time approximation scheme for the same on planar graphs. 

The $s$-$t$ {\sc Flow Edge Interdiction} ($s$-$t$ FEI) is defined as follows~\cite{ghare71}: 

 \begin{description}\itemsep-1pt
  \item {\bfseries Input:} A directed graph $G=(N,A)$ with distinguished vertices $s$ and $t$,  positive integer capacity $u_{ij}$ for each arc $(i,j) \in A$, an interdiction cost function~$c : A \rightarrow \mathbb{Z}_{\geq 0}$,  
                   two positive integers $b$ and $r$. 
  \item {\bfseries Output:} A set of arcs $A'$ with $c(A') \leq b$ such that the maximum $s$-$t$ flow in~$G-A'$ has value at most $r$. 
 \end{description}

Philips \cite{philip} proved that $s$-$t$ FEI is strongly NP-complete for general graphs, even of degree at most 3. 
Pan and Schild \cite{pan} recently studied some variants of $s$-$t$ FEI on planar graphs.

 We study the parameterized complrtial coverexity \cite{Nie06} of the three edge interdiction problems defined above. First, we show that $b$-MVE is W[1]-hard
 with respect to $r$. In graphs with edges of weights only 0 or 1, $b$-MVE is FPT with respect to $b$. The reduction from {\sc Clique} to MMEI by Zenklusen \cite{abs-0804-3583} already
shows that MMEI with respect to $b$ is W[1]-hard, even in bipartite input graphs with  unit edge weights and interdiction costs . Complementing this result, we prove that MMEI
with $m$  as parameter is W[1]-hard as well, even in graphs with unit edge weights and interdiction cost. In contrast to parameter $b$, MMEI becomes fixed-parameter tractable (FPT) with respect to $m$, 
if further restricted to bipartite input graphs. Moreover, taking both $b$ and $m$ as parameters leads also to FPT when restricted to instances with unit edge weights. Concerning $s$-$t$ FEI, we prove that the parameterization
with $b$ is W[1]-hard, complementing the result by Wood \cite{Wood93}, that $s$-$t$ FEI is W[1]-hard with respect to $r$.

 We observe some close relation between partial covering problems on bipartite graphs and edge interdiction problems. The goal of a partial covering problem 
 is not to cover all elements but to minimize/maximize the number
 of covered elements with a specific number of sets. For instance, the {\sc Partial Vertex Cover} ($k$-PVC) problem asks for $k$ vertices
maximizing the number of covered edges. Partial covering problems have been studied intensively not only because they generalize classical covering problems,
 but also because of many real life applications, see for example \cite{AroraK06,Bar-Yehuda01,AminiFS11,FominLRS11}.  

Our findings about the relation between partial covering problems and edge interdiction problems can be summarized as follows: First, we give a parameterized reduction from the W[1]-hard
$k$-PVC problem to MMEI, leading to the W[1]-hardness of MMEI with respect to $m$. Then, we prove an equivalent relation between a special version of MMEI and $k$-PVC on bipartite
graphs and thus derive the FPT result of this special case of MMEI. Moverover, we prove the W[1]-hardness of $k$-PVC on bipartite graphs with the number of uncovered edges as parameter by
a reduction from MMEI with respect to parameter $b$. Further, we introduce a new edge interdiction problem which turns out to be equivalent to the {\sc Partial Edge Dominating Set}
problem and prove W[1]-hardness for both. Finally, we study another bipartite version of $k$-PVC.

\paragraph*{Preliminary}
  
  For a vertex $v$, the vertices which are adjacent to $v$ in $G$ form the neighborhood $N(v)$ of $v$. For $U \subseteq V$, let $N(U)$ denote the set of all vertices which are adjacent to those in $U$. 
  We denote the size of the neighborhood of $v$ in $G$ as $\text{deg}_G(v)$. A {\it degree-1}
  vertex is a vertex with $\text{deg}_G(v)=1$. A {\it path} from vertex $a$ to vertex $b$ is an ordered sequence $a=v_0,~v_1, \dots,~v_m=b$ of distinct
  vertices in which each adjacent pair $(v_{j-1}, v_j)$ is linked by an edge. The {\it distance} between two vertices is the number of edges on
the shortest path between them, while the distance between two edges $e_1$ and $e_2$ is the minimum of the distances of their endpoints.
  A subgraph $H=(V',E')$ of a graph $G=(V,E)$
  is a pair $V' \subseteq V$ and $E' \subseteq E$. We say that $H=(V',E')$ is an {\it induced} subgraph of $G$ if $V' \subseteq V$ and $E'=\{\{u,v\} \in E | u, v \in V'\}$ 
  and we denote $H=G[V']$. For a set of edges $S$, let $V(S)$ denote the set of endpoints of $S$. An edge $e$ is {\it dominated} by another edge $e'$
  if they share at least one endpoint. An edge $e$ is {\it } covered by a vertex $v$ if $v$ is one of the endpoints of $e$.  A {\it disconnecting set} of edges $F$ is such that $G-F$ has more connected components than $G$ does. A connected graph $G$ is $k$-{\it edge-connected}
  if every disconnecting edge set has at least $k$ edges.

 \section{\sc $b$-Most Vital Edges in MST}
  We consider two parameterizations for $b$-MVE. Firstly, we define {\sc Minimum $k$-way Edge Cut} which is vital in the proofs of the following two results.

  \begin{description}\itemsep-1pt
  \item {\sc \bfseries Minimum $k$-way Edge cut}  
  \item {\bfseries Input:} An undirected graph~$G=(V,E)$ with unit edge weight and non-negative integers~$k$ and $s$.  
  \item {\bfseries Question:} Is there a set $S \subseteq E$ with $|S| \leq s$ such that, $G-S$ has at least $k$ connected components?   
 \end{description}
 
  \begin{theorem} \label{thm:mstwhard}
  $b$-{\sc Most Vital Edges} is W[1]-hard with respect to the weight $r$ of the MST in $G-I$.
  \end{theorem}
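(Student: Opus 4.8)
The plan is to give a parameterized reduction from {\sc Minimum $k$-way Edge Cut}, parameterized by the number of components $k$, which is W[1]-hard by a result of Downey et al. Starting from an instance $(H=(V,E),k,s)$ of {\sc Minimum $k$-way Edge Cut}, I would build a complete graph $G$ on the same vertex set $V$, assign weight $0$ to every edge of $E$ and weight $1$ to every remaining edge of $G$, and set the interdiction budget to $b := s$ and the target weight to $r := k-1$. Intuitively, the weight-$1$ edges form a highly connected ``backbone'' that is useless to destroy, while the weight-$0$ edges are exactly the edges of $H$ that the interdictor may cut. As in these MST interdiction problems, I treat an interdiction set $I$ as feasible only if $G-I$ stays connected, so that the MST (and hence $\eta(G-I)$) is defined.

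The heart of the argument is a formula for the MST weight. For any $I \subseteq E(G)$ with $G-I$ connected, running Kruskal's algorithm first selects all surviving weight-$0$ edges, which spans each connected component of $H-(I\cap E)$, and then uses weight-$1$ edges only to join these components; hence $\eta(G-I)=c-1$, where $c$ is the number of connected components of $H-(I\cap E)$. In particular $\eta(G-I)$ depends only on which weight-$0$ edges are deleted. This lets me restrict attention to interdiction sets $I\subseteq E$: given any feasible $I$, putting the deleted weight-$1$ edges back yields the set $I\cap E$, for which $G-(I\cap E)\supseteq G-I$ is still connected and which has the same weight-$0$ component count $c$, hence the same MST weight $c-1$ and size at most $b$. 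So I may assume $I\subseteq E$, and then $\eta(G-I)=c-1$ with $c$ the number of components of $H-I$.

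With this normal form, correctness follows in both directions. If $H$ admits a cut $S$ with $|S|\le s$ leaving at least $k$ components, then $I:=S$ satisfies $|I|\le b$ and $\eta(G-I)=c-1\ge k-1=r$. Conversely, a feasible $I\subseteq E$ with $\eta(G-I)\ge r$ gives, by the formula above, a set of at most $b=s$ edges of $H$ whose removal leaves $c\ge r+1=k$ components. Since $r=k-1$ is bounded by a function of the source parameter $k$, this is a valid parameterized reduction, and W[1]-hardness transfers to $b$-MVE with parameter $r$; note that the construction uses only weights $0$ and $1$, so the hardness already holds in the weight regime of Frederickson and Solis-Oba.

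I expect the main obstacle to be making the ``deleting weight-$1$ edges never helps'' step fully airtight: one must argue both that such deletions leave the weight-$0$ component count $c$ untouched and that no budget-respecting deletion can be used to disconnect $G$ and thereby render $\eta$ undefined (a degenerate ``win'' for the interdictor). The connectivity convention on feasible interdiction sets, together with the complete backbone of weight-$1$ edges, is exactly what rules this out and guarantees that the optimal interdictor spends its entire budget cutting $H$.
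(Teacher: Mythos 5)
Your high-level plan coincides with the paper's: a parameterized reduction from {\sc Minimum $k$-Way Edge Cut} with $b=s$ and $r=k-1$, weight-$0$ edges encoding the input graph $H$, weight-$1$ edges supplying the connectivity, and the identity $\eta(G-I)=c-1$, where $c$ is the number of weight-$0$ components. The structural difference is that the paper does not rely on single weight-$1$ edges on the complement of $H$: between every non-adjacent pair it inserts a connection gadget, a clique on $b+1$ vertices, attached so that removing \emph{any} $b$ edges cannot increase the cost of routing through it. That gadget makes the weight-$1$ connection structure provably immune to the entire interdiction budget. Your construction lacks exactly this redundancy, and that is where the gap lies.

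Concretely, nothing in the problem definition restricts $I$ to keep $G-I$ connected; your feasibility convention contradicts the paper's own usage, since in the proof of Theorem~\ref{thm:mstfpt} the paper explicitly treats disconnection as an interdictor win with $\eta=\infty$. Under that reading, your backward direction fails whenever $s\ge n-1$: take $H=K_n$, $s=n-1$, $k=3$; the interdictor isolates one vertex of the complete graph $G$ with $n-1\le b$ deletions, so the $b$-MVE instance is YES, yet every $3$-way cut of $K_n$ costs at least $2n-3>s$, so the cut instance is NO. Nor does your convention rescue the argument on its own terms: when $H$ is complete there are no weight-$1$ edges at all, so any $I$ creating $c\ge 2$ weight-$0$ components disconnects $G$ and is infeasible, and the forward direction fails (e.g., $k=2$, $s=n-1$ is a YES cut instance that your reduction maps to NO). So under either convention the reduction is unsound in the regime $s\ge n-1$; your claim that ``the complete backbone of weight-$1$ edges'' rules out disconnection is false, because the weight-$1$ edges form only the complement of $H$, which may be empty. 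The gap is repairable: either pad the cut instance with a disjoint clique on $s+2$ vertices and ask for $k+1$ components, which forces $s\le n'-2$ so that the complete graph cannot be disconnected within budget, or replace each weight-$1$ connection by a structure admitting $b+1$ edge-disjoint routes---which is precisely the paper's gadget. Outside the degenerate regime ($s\le n-2$), your Kruskal-based formula and the normalization $I\subseteq E$ are correct.
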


 \begin{proof}
  We give a parameterized reduction from {\sc Minimum $k$-Way Edge Cut}.
 Downey et al. \cite{DowneyEFPR03} proved that {\sc Minimum $k$-Way Edge Cut} is W[1]-hard with parameter $k$. 
 Given an instance $G=(V,E)$ for {\sc Minimum $k$-Way Edge Cut}, we create
 an instance $G'=(V',E')$ for $b$-{MVE} as follows: For each vertex $v \in V$, we create a vertex $v \in V'$.
 For each edge $\{u,v\} \in E$, create an edge 
 $\{u,v\} \in E'$ with weight 0.
 For each pair of vertices $u,v$ in $V'$ such that $\{u,v\} \notin E$, we add a connection gadget $M$ between 
 $u$ and $v$ in the following way: 
 Create a clique with $b+1$ vertices as gadget $M$
 such that all edges in $M$ have weight 0. Now, connect $u$ to all vertices in 
 $M$ with edges of weight 1 and $v$ to all 
 vertices in $M$ with edges of weight 0. Let $X'$ be the set of all vertices in connection gadgets and $Y'$ be the set of edges in $G'$ with at least one endpoint in $X'$. 
 Now we show that $G$ has a $k$-way edge cut of size $s$ iff at most $b=s$ edges can be deleted from $G'$ such that MST of the remaining graph is at least $r=k-1$. 

 ($\Rightarrow$)
 Given the graph $G$, let $S$ be the solution for {\sc Minimum $k$-way Edge Cut}. Now, $G-S$ consists of at least $k$ connected components. 
 We take the edges in $G'$ corresponding to those in 
 $S$ as the solution $S'$ for {$b$-MVE}. We can observe that $G'[V' \setminus X']-S'$ consists of at least $k$ connected components. 
 Hence, every spanning tree of $G'-S'$ must pass through at least $k-1$ 
 connection gadgets $M$. Let $M'$ be one connection gadget through which the MST in $G'-S'$ passes. 
 Let $M'$ be connected to $u \in V' \setminus X'$ with edges of weight 0 and to $v \in V' \setminus X'$
 with edges of weight 1. The MST can span all vertices in $M'$ by using $b+1$ edges between $u$ and $M'$ and connect $u$ to $v$ by taking 
 exactly one edge between $M'$ and $v$. The MST must pass through at least
 $k-1$ such connection gadgets. Moreover, each connected component of $G'[V' \setminus X']-S'$ has a 
 MST of weight 0 and the vertices in 
 the remaining connection gadgets can be included
 in the MST by taking only weight-0 edges.
 This gives an MST of $G'$ of weight $k-1$.

 ($\Leftarrow$)  
 Let $S'$ be the solution for $b$-MVE on $G'$ and the MST of $G'-S'$ has weight at least $k-1$. 
 We first prove that $S' \cap Y' = \emptyset$. Let $M$ be the connection gadget between $u$ and $v$. 
 We can observe that the minimum weight of MST of $G[M \cup \{u,v\}]$ is 1. Since $M$
 is a clique of size $b+1$ and both $u$ and $v$ are connected to all vertices in $M$, the removal of arbitrary $b$ edges 
 from $G[M \cup \{u,v\}]$ cannot increase the weight of MST of $G[M \cup \{u,v\}]$.
 Hence, $S' \cap Y' = \emptyset$. Now, in order to increase the weight of MST of $G'-S'$, the interdictor must force the maximum usage of weight-1
 edges (which are available only in connection
 gadgets) in MST. To this end, we need to maximize the number of connected components in $G'-Y'$. 
 Hence, $S'$ is chosen in such a way that $G'-\{Y' \cup S'\}$ has the maximum number of connected components.
 This corresponds to {\sc Minimum $k$-Way Edge Cut} in $G$, which completes the proof.  \qed    

 \end{proof}

 \begin{theorem}\label{thm:mstfpt}
  Given an instance with edge weights only 0 and 1, $b$-{Most Vital Edges} is fixed-parameter tractable with respect to $b$.   
 \end{theorem}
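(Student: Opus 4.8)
The plan is to exploit the special structure of $\{0,1\}$-weighted graphs, where $\eta$ has a simple combinatorial form. First I would prove the formula: if $G$ is connected and $G_0$ denotes the spanning subgraph consisting of exactly the weight-$0$ edges, with $c_0$ connected components, then $\eta(G)=c_0-1$. A minimum spanning tree can take a spanning forest of $G_0$ for free and must then pay one unit for each of the $c_0-1$ weight-$1$ edges needed to join these components; no tree is cheaper, since every weight-$0$ edge lies inside a single component of $G_0$, so any spanning tree uses at least $c_0-1$ cross-component (hence weight-$1$) edges. An immediate consequence is that deleting a weight-$1$ edge never raises $\eta$ and can only threaten connectivity, so the interdictor may assume $I\subseteq E_0$. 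Hence $b$-MVE on $\{0,1\}$-weighted graphs is equivalent to: is there a set $I\subseteq E_0$ with $|I|\le b$ such that $G-I$ is connected and $G_0-I$ has at least $r+1$ components?

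Next I would extract the parameter bound. Since deleting one edge raises the component count of $G_0$ by at most one, a target of $r+1$ components is reachable only if $r+1\le c_0+b$; and if $r+1\le c_0$ the empty set already works. Thus the only interesting regime is $1\le\Delta:=r+1-c_0\le b$: we must create at most $b$ additional weight-$0$ components using at most $b$ deletions. Both quantities are bounded by the parameter, which is exactly the situation handled by the \emph{bounded-size} version of Minimum $k$-way Edge Cut: although the total number $r+1$ of parts may be large, only $\Delta\le b$ of the $c_0$ components of $G_0$ are ever split, and each such split is an (at most $b{+}1$)-way edge cut of a single connected component that uses at most $b$ edges. I would therefore invoke, as a black box, that deciding whether a connected graph admits a $k$-way edge cut of size at most $s$ is fixed-parameter tractable in $s$ (the result of Kawarabayashi and Thorup), apply it component by component to tabulate, for each component $C_i$ and each $t\le b+1$, the minimum number of deletions that split $C_i$ into at least $t$ parts, and combine these costs by a knapsack-style dynamic program over the at most $b$ components that get split.

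The step I expect to require the most care is the connectivity side constraint that $G-I$ remain connected, without which $\eta(G-I)$ is undefined. A cut maximizing the number of weight-$0$ components might disconnect $G$, and conversely a cut forced to preserve connectivity is no longer an unconstrained $k$-way cut, so the black box does not apply verbatim. I would handle this by restricting the candidate sets $I$ to those whose removal leaves $G$ connected -- equivalently, every part of the induced partition must still send out a surviving (weight-$1$ or weight-$0$) edge -- and argue that this polynomial-time-checkable requirement can be folded into the branching of the cut algorithm without destroying the $f(b)\cdot\mathrm{poly}(n)$ running time. Verifying that this \emph{constrained} cut problem stays fixed-parameter tractable, rather than only the unconstrained one, is the crux of the argument.
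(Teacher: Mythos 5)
Your core reduction is the same as the paper's: observe that for $\{0,1\}$ weights $\eta(G)=c_0-1$ where $c_0$ counts the components of the weight-$0$ subgraph, conclude that interdiction amounts to maximizing the number of weight-$0$ components, invoke the Kawarabayashi--Thorup FPT algorithm for \textsc{Minimum $k$-way Edge Cut} per component, and combine per-component tables with a budget-$b$ dynamic program (your knapsack DP is in fact cleaner than the paper's enumeration of additive partitions of $b$ followed by branching on the top-$b$ table entries). However, the step you yourself flag as the crux contains a genuine gap, and it stems from reading the problem with the opposite convention from the paper. The paper treats a disconnected graph as having MST weight $\infty$, so disconnecting $G$ is the interdictor's \emph{best} outcome, not a forbidden one. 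This dissolves your ``connectivity side constraint'' entirely via a case split on edge connectivity: if $G$ has an edge cut of size at most $b$ (checkable in polynomial time), the answer is immediately yes; otherwise $G$ is $(b+1)$-edge-connected, no set of at most $b$ deletions can disconnect it, the constraint that $G-I$ stay connected is vacuous, and moreover deleting weight-$1$ edges is provably useless (this is the paper's Claim~1, which holds only under $(b+1)$-edge-connectivity --- your blanket claim that $I\subseteq E_0$ may be assumed is false under the paper's convention, since weight-$1$ deletions can help by disconnecting $G$, and your algorithm would then wrongly answer no on instances where disconnection is the only way to reach $\eta\geq r$).

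Concretely, your proposed fix --- to ``fold the connectivity requirement into the branching of the cut algorithm'' --- is the unproven step: Kawarabayashi--Thorup is used strictly as a black box, and there is no argument that a global side constraint (every part of the partition must retain a surviving edge to the rest of $G$) can be imposed on its internals while preserving the $f(s)\cdot\mathrm{poly}(n)$ running time; a $k$-way cut constrained this way is a genuinely different problem, and you give no reduction to the unconstrained one. The missing idea is precisely the two-case edge-connectivity split above, which makes the constrained problem never arise and lets the black box apply verbatim. With that split inserted (and your $I\subseteq E_0$ reduction restricted to the $(b+1)$-edge-connected case), the rest of your argument --- the formula $\eta(G)=c_0-1$, the bound $\Delta\leq b$ on the number of components that get split, the per-component tables for $t\leq b+1$, and the DP combination --- goes through and matches the paper's proof.
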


 \begin{proof}
 Kawarabayashi and Thorup \cite{KT11} proved that {\sc Minimum $k$-Way Edge Cut} is FPT with respect to $s$. Here, we use their algorithm as a black box. 
 If the input graph $G$ is $d$-edge-connected with $d \leq b$, then we can find an edge cut $S$ of size at most $b$ for $G$. Since 
 $G-S$ is disconnected, we take $S$ as a solution for  {\sc $b$-MVE} and the weight of any MST of disconnected graphs is $\infty$. 
 On the other hand, if $G$ is ($b+1$)-edge-connected, we need the following claim: 

 \begin{claim}\label{lem:dconnected}
 Given a ($b+1$)-edge-connected graph, a solution of {\sc $b$-MVE} contains no weight-1 edge. 
 \end{claim}

 \begin{proof}
 Let $V_1$ and $V_2$ be an arbitrary partition of vertices of $G$ such that $V_1 \cap V_2 = \emptyset$ and $S$ be a solution of
 $b$-MVE. Let $T$ be the minimum spanning tree of $G-S$. Now, we show that if $G$ is ($b+1$)-edge-connected,~$S$ does not contain any weight-1 edge between vertices in $V_1$ and $V_2$. Since $G$ is ($b+1$)-edge-connected, there is at least one edge between $V_1$
 and $V_2$ in $G-S$. Hence, the worst-case cost of connecting $V_1$ and $V_2$ in $T$ is 1. So, it is never profitable to delete any edge of weight 1 between $V_1$ and $V_2$.   \qed 
 \end{proof}

 By this claim, if a ($b+1$)-edge-connected graph has only weight-1 edges, then the solution is empty. 
 Let $G$ be an instance of {\sc $b$-MVE}, we run the following:

\noindent
{\bfseries Step 1.} Delete all weight-1 edges from $G$. Let $G-X$ be the resulting graph where $X$ is the set of all 
      weight-1 edges in $G$. 

\noindent 
  {\bfseries Step 2.}  In each connected component of $G-X$, we run the FPT-algorithm from \cite{KT11}
      with $s$ ranging from 1 to $b$. For each connected component $C$ of $G-X$ we maintain a table A, where for each number $1 \leq i \leq b$, we
 store the maximal number of connected components that can be achieved by deleting $i$ edges in $C$. This table is of size $b \times 2$ for each connected
      component of $G-X$ and can clearly be filled in FPT time with respect to $b$.  

\noindent
 {\bfseries Step 3.} For each integer $1 \leq i \leq b$, we sort the connected components of $G-X$ according to the decreasing order of the numbers of 
       resulting components with~$i$ edge deletions as returned by Step 2. For
       each $i$ we save the top $b$ entries in this sorted list, resulting in a table $B$ of size $b \times b$. 

\noindent
{\bfseries Step 4.}  Now, we enumerate all additive partitions of $b$. The partition function~$p(b)$ gives the number of different additive partitions of $b$ without respect to 
        order which is clearly bounded by $2^{b-1}$. Such a partition can be computed in time polynomial in $b$. Let, $P_1, P_2, \cdots, P_{p(b)}$ be the additive partitions of $b$. 

\noindent
{\bfseries Step 5.} For each additive partition $P_i$ we do the following: Assume that $P_{i}$ consists of $j \leq b$ integers.
        Now for each integer $x \in P_{i}$, we branch on the first $j$ entries corresponding to $x$ from Table B, each branch assigning exactly
        one entry to $x$, that is, one connected component for $x$ from $G-X$. This will
        take $O^*( b^b)$ time for each $P_i$. Since there are at most $n$ connected components in $G-X$, Step 5 runs in $O^*(b^b)$ time.   

 \noindent{\it Correctness.}
 Correctness of Step 1 follows directly from the above claim. We prove now in Step 5 it is sufficient to branch only on the first $j$ entries corresponding 
 to $x$ from Table B. For each integer $x \in P_i$ exactly one connected component from $G-X$ is assigned. Now, the top candidate for $x$ in Table B will not be assigned to $x$
 if and only if it is assigned to another integer $y \in P_i$. There can be at most $j-1$ such integers $y \in P_i$. Hence, it is sufficient to consider only  the first $j \leq b$ 
 entries corresponding 
 to $x$ from Table B. Steps 1 and 3 can be achieved in time polynomial in $n$. Steps 2, 4 and 5 are FPT with respect to $b$. Hence we have an overall running time 
 exponentially depending on $b$.  \qed

 \end{proof}

 \section{Maximum Matching and S-T Flow Interdiction}

 In this section, we study the edge interdiction problems for maximum matching and $s$-$t$ flow problems from parameterized complexity point of view. 
 The reduction from {\sc Clique} to MMEI in \cite{abs-0804-3583} is also a parameterized one, proving that MMEI
 with respect to $b$ is W[1]-hard, even on bipartite graphs with unit edge weight and unit interdiction cost. 
 Now, we prove a similar result for the parameter $m$.

 \begin{theorem}\label{thm:mmeim}
 MMEI with unit edge weight and interdiction cost is W[1]-hard with respect to $m$. 
 \end{theorem}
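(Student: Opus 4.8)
\noindent\textit{Proof idea.} The plan is to give a parameterized reduction from \textsc{Partial Vertex Cover} ($k$-PVC), which is W[1]-hard with respect to the number $k$ of chosen vertices, and to build the MMEI instance so that its target matching value obeys $m = O(k)$; a reduction keeping the new parameter bounded by a function of the old one transfers W[1]-hardness, giving the theorem. I read a $k$-PVC instance $(H,k,t)$ as: can $k$ vertices of $H$ be chosen so that at most $\ell := |E(H)| - t$ edges stay uncovered? The guiding principle is the cover interpretation of matching interdiction. I will design $G$ so that an optimal interdictor spends its budget deleting the gadget edges that encode the (at most $\ell$) uncovered $H$-edges, while the matching that survives on the \emph{covered} part is pinned to the number of chosen vertices. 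This is exactly why $m$, rather than $b$, plays the role of the cover size: I set $m = \Theta(k)$ and let $b = \Theta(\ell)$ be possibly large.

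\smallskip
\noindent First I would fix the construction. For every vertex $v \in V(H)$ I attach a \emph{selection gadget}, and for every edge $e = \{u,v\} \in E(H)$ an \emph{edge gadget} linked to the selection gadgets of $u$ and $v$; interdicting one designated edge of an edge gadget realizes ``leave $e$ uncovered,'' and the surviving matching is charged to the selection gadgets. A hard design constraint is that $G$ must be \emph{non-bipartite}: MMEI parameterized by $m$ is fixed-parameter tractable on bipartite inputs (there it is equivalent to $k$-PVC on bipartite graphs, which is tractable for this parameter), so any correct reduction has to introduce odd structure. I therefore plan to place a short odd cycle inside each selection gadget and tune the incidences so that (i) each edge gadget forces exactly one unit into every maximum matching unless $e$ is covered by a chosen endpoint or interdicted, and (ii) on the covered part the maximum matching collapses to precisely the number of chosen selection gadgets, so $\nu$ reads off the cover size.

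\smallskip
\noindent The correctness then splits into two directions, and the reverse one is the main obstacle. The forward direction is routine: from a set $S$ of $k$ vertices leaving at most $\ell$ edges uncovered I interdict the designated edges of the corresponding edge gadgets (at most $\ell \le b$ deletions) and check that the residual graph has no matching larger than $|S| = \Theta(k) = m$. The reverse direction asks me to turn an \emph{arbitrary} interdiction $I$ with $|I| \le b$ and $\nu(G-I) \le m$ into a partial cover of size $\le k$, and this is delicate for two reasons. First, in a general graph $\nu(G-I) \le m$ does not on its own bound the vertex cover of $G-I$, so the gadget must \emph{force} a K\"onig-type equality (maximum matching equal to the relevant vertex cover) on every graph $G-I$ reachable within budget. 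Second, I must show the interdictor cannot profit from deviating from the canonical ``delete uncovered edge gadgets'' strategy, which I would handle by an exchange argument normalizing $I$. The quantitative heart of the argument is a contrapositive count: using the elementary fact that deleting one edge lowers the matching number by at most one, an interdiction cheap enough to miss more than $\ell$ edge gadgets must leave more than $m$ vertex-disjoint forced matching edges, whence $\nu(G-I) > m$. Making this count tight, so that exactly the $k$ chosen selection gadgets account for the whole surviving matching, is the step I expect to require the most care.
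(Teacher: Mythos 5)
Your top-level plan (parameterized reduction from $k$-PVC with $m=\Theta(k)$, budget $b$ unbounded in the parameter) is the same as the paper's, but your concrete construction has a fatal quantitative flaw. You set $b=\Theta(\ell)$, where $\ell$ is the number of uncovered $H$-edges, and you place an odd cycle inside \emph{every} selection gadget. But then every one of the $n=|V(H)|$ selection gadgets contributes at least one edge to any maximum matching (an odd cycle has $\nu\geq 1$), and by the very fact you cite --- deleting one edge lowers $\nu$ by at most one --- any interdiction with $|I|\leq b=\Theta(\ell)$ leaves $\nu(G-I)\geq n-\Theta(\ell)$, which exceeds $m=\Theta(k)$ for all but trivially small instances. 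So already the \emph{forward} direction fails: your canonical interdiction (delete only the designated edges of uncovered edge gadgets) cannot bring the matching down to $m$. The deeper problem is that nothing in your construction encodes \emph{which} $k$ gadgets are ``chosen'': the interdictor spends its entire budget on edge gadgets, so the residual matching cannot ``collapse to the number of chosen selection gadgets'' --- selection must itself be realized by deletions, and that requires a budget large enough to dismantle the $n-k$ non-chosen gadgets, i.e.\ $b$ must be large (which is fine, since only $m$ is the parameter). Your instinct to ``force a K\"onig-type equality'' is also at odds with your own gadgets: odd cycles are precisely where K\"onig fails, and no such equality is needed anyway.

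The paper's reduction shows how to fix this with far less machinery. Start from the $k$-PVC instance $G=(V,E)$ itself (no odd structure is \emph{added}: hard $k$-PVC instances are already non-bipartite, which is the only reason the output escapes the bipartite FPT case --- your claim that the reduction ``has to introduce odd structure'' is a misreading), and attach $|E|$ degree-one neighbors to every vertex. Set $m=k$ and $b=|E|(|V|-k)+(|E|-x)$. A cover $S$ of size $k$ covering $x$ edges yields the interdiction ``delete every edge not incident to $S$'': cost at most $|E|(|V|-k)+(|E|-x)$, and every surviving edge touches $S$, with each $S$-vertex retaining its pendant bundle, so $\nu\leq k$. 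Conversely, pendant edges at distinct vertices are vertex-disjoint, so $\nu(G'-I)\leq m$ forces all but at most $m$ vertices to have \emph{all} $|E|$ of their pendant edges deleted, costing at least $|E|(|V|-m)$; the leftover budget $|E|-x$ then bounds the number of deleted original edges, and any original edge avoiding the $\leq m$ surviving-pendant vertices $X$ would extend the pendant matching past $m$, so $X$ covers at least $x$ edges. Note how the pendant bundles simultaneously play the role of your ``selection'' (keeping a bundle costs exactly one matching unit) and of your K\"onig surrogate (the counting is purely via vertex-disjoint pendant edges), with no per-gadget odd cycles and no normalization/exchange argument needed.
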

 
\begin{proof}
 We give a parameterized reduction from the W[1]-hard {\sc Partial Vertex Cover} ($k$-PVC) problem with parameter $k$ \cite{GuoNW05}. Given an instance $G=(V,E)$ for $k$-PVC, we create an instance
 $G'=(V',E')$ for MMEI in the following: We initiate $G'$ with $G$ and for each vertex in $G'$, we add
 $|E|$ degree-1 neighbors. Let~$Y$ be the set of degree-1 neighbors added in this way. Next, we show that~$G$ has a set $S$
 of size $k$ which covers at least $x$ edges in~$G$ iff $G'$ has a solution~$I$ with  
 $b \leq |E|(|V|-k) + (|E|-x)$ and $\nu (G' - I)  \leq m = k$.

 ($\Rightarrow$)
 Given a solution $S$ of $k$-PVC on $G$, we construct the MMEI solution $I$ for $G'$ as follows: We 
 add all edges in $G'$ which are not incident to any vertex in $S$ to $I$. 
 Since $S$ covers at least $x$ edges in $G$, we add at most $|E|-x$ edges from $E$ and $|E|(|V|-k)$ edges between $Y$ and $V$ to $I$.
 In the subgraph $G'-I$,  every edge is incident to vertices in $S$.
 Hence
 $\nu(G'-I)$ is at most $k$.

 ($\Leftarrow$)
 Let $I$ be the given solution of MMEI for $G'$ with $b \leq |E|(|V|-m) + (|E|-x)$ and $\nu (G' - I) \leq m$. 
 Since $\nu (G' - I) $ is at most $m$, at most $m$ vertices in $G'[V' \setminus Y]-I$ can have 
 degree-1 vertices
 attached to them. Let $X$ denote the set of vertices in $G'[V' \setminus Y]-I$ which have degree-1 neighbors.
 To remove all degree-1 neighbors of the vertices in $V' \setminus X$ requires addition of $|E|(|V|-m)$ edges to $I$. Hence, the vertices in $X$
 must cover at least $x$ edges in $G$ and $X$ is solution for $k$-PVC for $G$.     \qed

 \end{proof}

However, unlike for the parameter $b$, the parameterization of MMEI with $m$ becomes tractable, if we restrict the input graphs to be bipartite.

 \begin{theorem}\label{thm:mmeibpg}
 MMEI is FPT with respect to~$m$ when restricted to bipartite graphs with unit edge weight and interdiction cost. 
 \end{theorem}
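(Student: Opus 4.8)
The plan is to exploit K\"onig's theorem to turn the matching condition into a covering condition, and then to recognise the resulting problem as $k$-{\sc PVC} on bipartite graphs, which I will argue is fixed-parameter tractable in the number $k$ of chosen vertices. Since $G=(V,E)$ is bipartite with unit edge weights, $\nu(G-I)$ is the cardinality of a maximum matching, and by K\"onig's theorem this equals the size $\tau(G-I)$ of a minimum vertex cover of $G-I$; unit interdiction costs give $c(I)=|I|$. Hence the instance is a YES-instance iff there are $C\subseteq V$ with $|C|\le m$ and $I\subseteq E$ with $|I|\le b$ such that $C$ covers every edge of $G-I$. For a fixed $C$ the cheapest valid $I$ is exactly the set of edges not covered by $C$, so the instance is a YES-instance iff some set of $m$ vertices leaves at most $b$ edges uncovered, that is, covers at least $|E|-b$ edges. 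This is precisely $k$-{\sc PVC} on $G$ with $k=m$ and coverage target $t=|E|-b$, so it suffices to show that $k$-{\sc PVC} on bipartite graphs is FPT with respect to $k$.

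For $S\subseteq V$ with $|S|=k$ the number of covered edges equals $\deg(S)-e(S)$, where $\deg(S)=\sum_{v\in S}\deg_G(v)$ and $e(S)$ counts the edges with both endpoints in $S$. Let $D_k$ be the sum of the $k$ largest degrees and let $d_k$ be the $k$-th largest degree. Every $S$ satisfies $\mathrm{coverage}(S)\le\deg(S)\le D_k$, while for the set $S_0$ of the $k$ largest-degree vertices the internal edges cross the bipartition, so $e(S_0)\le\lfloor k/2\rfloor\lceil k/2\rceil\le k^2/4$ and $\mathrm{coverage}(S_0)\ge D_k-k^2/4$. Thus the optimum lies in the window $[\,D_k-k^2/4,\ D_k\,]$: if $t\le D_k-k^2/4$ I answer YES, witnessed by $S_0$, and if $t>D_k$ I answer NO. Both thresholds are computable in polynomial time.

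The remaining, and hardest, case is $D_k-k^2/4<t\le D_k$, where the admissible ``loss'' $(D_k-\deg(S))+e(S)$ is bounded by $s:=D_k-t<k^2/4=O(k^2)$. Here every vertex of a feasible $S$ has degree at least $d_k-s$ (otherwise the degree deficit alone exceeds $s$), so only near-maximum-degree vertices are relevant, and $G[S]$ has at most $s$ edges; consequently at most $2s=O(k^2)$ vertices of $S$ are incident to an internal edge. I split $S$ into this bounded \emph{conflict set} and a \emph{conflict-free} remainder having no internal edge. Once the conflict set is fixed, the remainder is a maximum-weight (weight $=$ degree) independent set of prescribed cardinality in the bipartite subgraph induced by the remaining admissible vertices, which is computable in polynomial time via K\"onig/flow techniques.

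The main obstacle is exactly the conflict set: it ranges over all near-maximum-degree vertices, of which there may be many, so I cannot simply enumerate it. I expect to resolve this by a dichotomy on their number. If only $\mathrm{poly}(k)$ vertices have degree at least $d_k-s$, I brute-force over the $O(k^2)$-sized conflict set among them, in time $k^{O(k^2)}$, and complete each guess with the polynomial-time conflict-free computation above. If instead many such vertices exist, I argue that enough freedom is available to realise nearly the top-$k$ degrees with so few internal edges that the target $t$ is met, again reducing to the bounded computation. Combining the three regimes yields an algorithm whose running time depends exponentially only on $k=m$, giving the claimed fixed-parameter tractability.
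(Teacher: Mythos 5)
Your first paragraph reproduces exactly the paper's argument: by K\"onig's theorem (needed for the direction from small matching to small cover), MMEI on bipartite graphs with unit weights and costs is equivalent to $k$-PVC on the same graph with $k=m$ and coverage target $|E|-b$, since for a fixed candidate cover $C$ the optimal interdiction set is the set of edges not covered by $C$. This part is correct. But at this point the paper simply invokes the known result of Amini, Fomin and Saurabh that $k$-PVC on bipartite graphs is solvable in $O^*(2k^{(2k)})$ time, whereas you attempt to reprove that FPT result from scratch, and your algorithm has two genuine gaps. The window analysis itself (coverage $=\deg(S)-e(S)$, the YES/NO thresholds, the degree bound $d_k-s$ for feasible vertices, and the $O(k^2)$ bound on the conflict set) is fine. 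The first gap is the ``conflict-free completion'': you need a maximum-degree-weight independent set of \emph{prescribed cardinality} in a bipartite graph, and you assert this is polynomial ``via K\"onig/flow techniques'' without an algorithm. This is not the standard maximum-weight independent set problem: the optimal value $f(q)$ as a function of the cardinality $q$ is not concave --- in a star $K_{1,k}$ with degree weights, $f(1)=k$, $f(2)=2$, $f(3)=3$, so the increments increase --- and hence Lagrangian relaxation and parametric min-cut, the natural ``flow techniques,'' provably fail to hit a prescribed cardinality. No other method is offered, so the polynomial-time claim is unsubstantiated.

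The second gap is the dense regime, which you yourself flag as ``I expect to resolve.'' The sketched claim --- that many vertices of degree at least $d_k-s$ give enough freedom to realise nearly the top-$k$ degrees with few internal edges --- is false in general: $D_k$ may be dominated by a few vertices of degree far exceeding $d_k$, split across both sides of the bipartition and pairwise adjacent, so that including them forces internal edges, while the abundant admissible vertices have degree only $d_k-s$. Swapping a single heavy vertex for an abundant one can cost $d_1-d_k+s\gg s$, and taking $k$ same-side admissible vertices covers only about $k(d_k-s)$ edges, i.e., loses up to $k\cdot s$ against the allowed loss $s$. So neither branch of your dichotomy is established, and what remains open is precisely the hard core of bipartite partial vertex cover --- the interaction of high-degree vertices across the bipartition --- which is exactly what the implicit-branching technique of Amini--Fomin--Saurabh is designed to handle. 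The reduction half of your proof stands; to complete it you should either cite their algorithm, as the paper does, or supply actual proofs for both regimes of your dichotomy, neither of which is currently present.
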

 
 \begin{proof}
 
 We prove the theorem by showing that, for a bipartite graph $G=(X,Y,E)$, there is a partial vertex cover $S$ with $|S| \leq k$ covering at least $x$ edges, if
and only if there is a set $I \subseteq E$ with $|I| \leq |E|-x$ and $\nu (G-I) \leq m= k$. Note that $k$-PVC on bipartite graphs is solvable in 
 $O^*(2k^{(2k)})$ time \cite{AminiFS11}, proving the theorem. 

Let $S$ be a size-$k$ partial vertex cover of $G$ and $I$ be the set of edges not incident to the vertices in $S$. Then, $|I| \leq |E|-x$. Since,
all edges in $G-I$ are incident to vertices in $S$ and each vertex in $S$ must have an incident edge whose other endpoint is not in $S$, $\nu (G-I) \leq k$. This is
true because, if  each vertex in $S$ does not have an incident edge whose other endpoint is not in $S$, then the cover can be smaller. The reverse 
direction can be shown in similar way. \qed

 \end{proof}
Using both $b$ and $m$ as parameters, we can achieve another FPT result for MMEI.

\begin{theorem}\label{thm:mmfpt}
MMEI parameterized by both $b$ and $m$ is FPT in graphs with unit edge weight.

\end{theorem}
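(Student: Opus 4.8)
The plan is to solve the problem by a bounded search tree whose depth is controlled by $b$ and whose branching factor is controlled by $m$. The crucial point, which is where the unit edge weight assumption enters, is that with unit weights $\nu(G-I)$ is simply the cardinality of a maximum matching of $G-I$, and deleting a single edge decreases this cardinality by at most one. Consequently, if $M$ is any matching of $G$ of size $m+1$ that survives the interdiction (that is, $M \subseteq E \setminus I$), then $\nu(G-I) \ge m+1$. Hence every feasible interdiction set $I$ with $\nu(G-I) \le m$ must contain at least one edge of each size-$(m+1)$ matching of $G$, and it is this ``hitting'' property that I would branch on.

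Concretely, I would first compute a maximum matching of the current graph in polynomial time (Edmonds' algorithm applies to general graphs). If its size is at most $m$, the accumulated edge set is already feasible and we stop. Otherwise the maximum matching has size at least $m+1$; I fix any $m+1$ of its edges, forming a matching $M=\{e_1,\dots,e_{m+1}\}$. By the observation above a feasible $I$ must delete at least one $e_j$, so I branch into $m+1$ subproblems, the $j$-th one adding $e_j$ to $I$, decreasing the remaining budget by $c(e_j)$, and recursing on $G-e_j$. Since every interdiction cost satisfies $c(e_j)\ge 1$, the remaining budget strictly decreases by at least one along every branch, so the recursion depth is at most $b$; a branch is pruned as soon as its remaining budget would turn negative. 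With branching factor $m+1$ this yields a search tree of at most $(m+1)^b$ nodes, each processed in polynomial time, for an overall running time of $O^*((m+1)^b)$, which is FPT with respect to $b$ and $m$ jointly.

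Correctness follows by induction on the budget $b$. Any set returned by the procedure is feasible by construction. Conversely, given a solution $I^\ast$ with $c(I^\ast)\le b$ and $\nu(G-I^\ast)\le m$: if $\nu(G)\le m$ we are done; otherwise the selected matching $M$ cannot be fully contained in $E\setminus I^\ast$, so some $e_j\in M\cap I^\ast$. Along the branch deleting $e_j$, the set $I^\ast\setminus\{e_j\}$ is a feasible solution for the instance $(G-e_j,\,b-c(e_j),\,m)$, since $\nu((G-e_j)-(I^\ast\setminus\{e_j\}))=\nu(G-I^\ast)\le m$ and its cost is at most $b-c(e_j)$; as $b-c(e_j)<b$, the induction hypothesis guarantees this branch returns a feasible set. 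As an optional sanity preprocessing one may reject immediately when $\nu(G)>m+b$, because then $\nu(G-I)\ge \nu(G)-|I|\ge \nu(G)-b>m$ for every $I$ with $c(I)\le b$; this is already subsumed by the depth bound.

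The only genuinely delicate point I expect is justifying that every feasible interdiction set must ``hit'' a single chosen size-$(m+1)$ matching, which is exactly the at-most-one-decrease property of matching cardinality under edge deletion. This is also precisely why the argument is confined to \emph{unit} edge weights: for arbitrary edge weights a maximum-weight matching need not consist of a bounded number of edges, so neither the branching factor $m+1$ nor the depth bound $b$ would survive, and the search tree would no longer be of FPT size.
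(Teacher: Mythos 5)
Your proof is correct, but it takes a genuinely different route from the paper. The paper proceeds by kernelization: it applies two reduction rules (capping the number of degree-1 neighbors of any vertex at $b$, and shrinking independent ``module'' sets $W$ with $N(W)=U$ joined completely to $U$ down to $\max\{|U|,b+1\}$ vertices), and then argues that any reduced yes-instance has at most $lm+4bm+4m^2$ edges with $l=\sum_i 2^i\binom{m}{i}\max\{i,b+1\}+m$, i.e.\ a kernel whose size depends only on $b$ and $m$. You instead use a bounded search tree: since any matching of size $m+1$ surviving the interdiction certifies $\nu(G-I)\ge m+1$, every feasible $I$ must hit each size-$(m+1)$ matching, so branching on the $m+1$ edges of a truncated maximum matching, with depth at most $b$ because $c(e)\ge 1$, gives a correct $O^*\bigl((m+1)^b\bigr)$ algorithm; your induction on the remaining budget is sound. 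What each approach buys: the paper's argument yields a kernel (a reusable structural statement, though note its size is exponential in $m$), while yours yields a simple, explicitly analyzed algorithm with a concrete running time and needs no module-detection machinery. One remark on your closing paragraph: your stated reason for confining the argument to unit weights is weaker than you think. Although a maximum-\emph{weight} matching need not have boundedly many edges, you could discard weight-0 edges (they never affect $\nu$) and branch on a \emph{minimal} sub-matching of weight at least $m+1$, which has at most $m+1$ edges since each remaining edge has weight at least $1$; the same depth bound $b$ applies. This observation would in fact extend your algorithm to arbitrary non-negative integer weights, a case the paper explicitly lists as open in its outlook, so it is worth checking and writing up carefully.
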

 
 \begin{proof}
 We show that in the instances with unit weight MMEI with both $b$ and $m$ as parameters admits a kernel. We apply the following reduction rules:\\

\noindent
{\bfseries Reduction Rule 1:}
 If a vertex $v$ has more than $b$ degree-1 neighbors, then keep $b$ of them and remove other degree-1 neighbors of $v$.

The correctness of Rule 1 can be shown as follows:
Assume that a vertex $v$ has more than $b$ degree-1 neighbors. Let $X$ be the set of all degree-1 neighbors of $v$. It is not possible to remove all edges between
 $v$ and $X$ with at most $b$ edge deletions allowed. Then, one
 of the edges between $v$ and $X$ can be in the matching. Now, keeping $b$ of them obviously does not 
 omit any optimal solution. \\

\noindent
{\bfseries Reduction Rule 2:} If $U \subseteq V$ and $W \subseteq V$ with $U \cap W \neq \emptyset$ satisfy: 
1) $W$ is an independent set, 
2) $N(W) = U$ and there are all possible edges between $U$ and~$W$, and
3) $|W| \geq \text{max}\{|U|, b+1\}$,
then keep only max$\{|U|, b+1\}$ vertices in $W$ and remove the rest.

We prove now the correctness of Rule 2.
Let $H$ be the bipartite graph $G[U \cup W]-E(G[U])$. Notice that we have $\nu(H)=|U|$ and there are at least $|W|$ disjoint matchings in $H$ which saturate $U$. Now we show that removing any $b$
edges from $H$ does not decrease the cardinality of a maximum matching in $G$. This property is obtained by observing that since in $H$ there are at least $b+1$ disjoint matchings which saturate $U$, 
we have after removing up to $b$ edges in~$G$ there is at least one matching of $H$ which saturate $U$. We therefore have the desired property that $H$ is immune to ``edge removals''. Hence, removing
all but $\text{max}\{b+1,|U|\}$ vertices from $|U|$ still maintains this property.

Rule 2 runs in polynomial time, since $W$ is clearly a module and all modules of a graph can be found in linear time \cite{HsuM91}.

 \begin{claim}
 MMEI with both $b$ and $m$ as parameters admits a kernel.
 \end{claim}

 \begin{proof}
 Let $B$ be the set of edges which form the solution of MMEI and let~$M$ be the maximum matching 
 of the remaining graph. Since the cost of each edge are positive integers, $|B| \leq b$ and $|M| = m$.
 Moreover, $|V(B)| \leq 2b$ and~$|V(M)|~=~2m$. 
 As $M$ is the maximum matching in $G-B$, each edge in $G-B$ must have at least one endpoint in $V(M)$. Hence, there exist at most $b$ edges in $G$ which 
 do not have its endpoints in $V(M)$.

 Now, we bound the number of edges with its endpoints in $V(M)$. 
 There are at most $4m^2$ edges in $G[V(M)]$. Let $X$ be the set of degree-1 neighbors of~$V(M)$. Rule 1 bounds the 
 number of degree-1 neighbors of each vertex by~$b$; hence, there are at most $2bm$ edges between $V(M)$ and $X$. The number of edges between
 $V(B)$ and $V(M)$, is clearly bounded by $4bm$. Now, the remaining edges are between~$V(M)$ and $Y:=V \setminus (X \cup V(B) \cup V(M)$.
 We can observe that the vertices in $Y$ have degree at least two and $N(Y) \subseteq V(M)$. 
 For each edge~$\{m_1, m_2\} \in M$, there exists at most one vertex in $Y$ which is adjacent to both $m_1$ and $m_2$. We have $m$ edges in $M$, hence
 there are at most $m$ vertices in~$Y$ which are adjacent to both endpoints of an edge in $M$. Next, we bound the number of vertices in~$Y$, which are adjacent
 to several edge in $M$. By Rule~2, for a size-$i$ subset of vertices $I \subset V(M)$ such that no two vertices in $I$ are connected by an edge in $M$, there can be at most max$\{i, b+1\}$ vertices $y$ 
 in $Y$, such that~$N(y)=I$. There can be at most $ \sum_i 2^i  \cdot {m \choose i}$ such subsets $I$ in $V(M)$. Hence we have at most
 $l= \sum_i 2^i \cdot {m \choose i} \cdot \text{max} \{i, b+1\}~+~m$ vertices in $Y$. Hence, there are at most $lm$ edges between $Y$ and $M$. In total
 we have $lm + 4bm + 4m^2$ edges which is a function depending only on $m$ and $b$. Hence we have a kernel for MMEI with both $b$ and $m$ as parameters.  \qed 
 \end{proof}

 \end{proof}

Wood \cite{Wood93} proved the NP-hardness of $s$-$t$ FEI by a reduction from {\sc Clique}, which sets the flow amount in the resulting graph $r$ equal to $k$. This implies that~$s$-$t$ Flow Edge Interdiction 
with unit edge cost and the edge capacity being~1 or 2 is W[1]-hard with respect to $r$.
Complementing this result, we achieve the W-hardness of $s$-$t$ FEI for parameter $b$.

 \begin{theorem}\label{thm:stwhardb}
  $s$-$t$ {\sc Flow Edge Interdiction} is W[1]-hard with respect to $b$ even with bipartite graphs with unit flow capacity as input.

 \end{theorem}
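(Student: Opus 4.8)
The plan is to give a parameterized reduction from MMEI on bipartite graphs with unit edge weight and unit interdiction cost, parameterized by $b$, which was argued above (via the reduction of \cite{abs-0804-3583}) to be W[1]-hard. The reduction exploits the classical equivalence between maximum matchings in a bipartite graph and maximum $s$-$t$ flows in the associated unit-capacity layered network, so that interdicting edges to shrink the matching becomes interdicting arcs to shrink the flow value.

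Concretely, given an MMEI instance consisting of a bipartite graph $G=(X,Y,E)$, budget $b$ and target $m$, I would build the network $H=(N,A)$ with $N = \{s,t\} \cup X \cup Y$ and arcs $(s,u)$ for every $u \in X$, an arc $(u,v)$ directed from $X$ to $Y$ for every edge $\{u,v\} \in E$, and $(v,t)$ for every $v \in Y$; every arc receives unit capacity. The $s$-$t$ FEI instance then uses interdiction budget $b$ and target $r=m$. All arcs of the form $(u,v)$ with $\{u,v\} \in E$ get interdiction cost $1$, whereas the source arcs $(s,u)$ and sink arcs $(v,t)$ get interdiction cost $b+1$. The underlying graph of $H$ is the four-layer graph $s - X - Y - t$ and is therefore bipartite (color $\{s\} \cup Y$ against $X \cup \{t\}$), and all capacities are $1$, as required by the theorem.

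For correctness I would first record the standard fact that the value of a maximum $s$-$t$ flow in a unit-capacity network of this form equals $\nu$ of the corresponding bipartite graph, and that deleting an arc $(u,v)$ with $\{u,v\} \in E$ corresponds exactly to deleting the edge $\{u,v\}$ from $G$. For the forward direction, a solution $I$ of the MMEI instance translates into the set $A'$ of the corresponding $X$-$Y$ arcs, which has cost $|I| \le b$ and yields maximum flow $\nu(G-I) \le m = r$. For the reverse direction, the key observation is that no feasible interdiction ever removes a source or sink arc: each such arc has cost $b+1 > b$, so any $A'$ with $c(A') \le b$ consists solely of unit-cost $X$-$Y$ arcs; the corresponding edge set $I$ then satisfies $|I| = c(A') \le b$ and $\nu(G-I) = \text{maxflow}(H - A') \le r = m$. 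Since the budget is transferred unchanged, the parameter $b$ is preserved, yielding a valid parameterized reduction.

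The construction itself is elementary; the only point that requires care is preventing the interdictor from ``cheating'' by deleting a source or sink arc, which in the matching world would amount to suppressing all edges at a single vertex at unit cost. I would handle this, as above, by assigning those arcs interdiction cost $b+1$ so that they can never belong to a feasible interdiction set, and then verify that the residual reduction leaves both bipartiteness and the unit-capacity requirement intact. Note that only unit \emph{capacity}, not unit interdiction cost, is claimed by the theorem, so the non-unit costs on the source and sink arcs are permissible.
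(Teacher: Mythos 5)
Your proposal is correct and matches the paper's own proof essentially verbatim: the same network $s \to X \to Y \to t$ with unit capacities, unit-cost arcs for the original edges, cost-$(b{+}1)$ source/sink arcs to forbid interdicting them, and the matching--flow equivalence with $r=m$. Your write-up is in fact slightly more careful than the paper's, since you explicitly verify bipartiteness of the resulting network and note that the theorem only requires unit capacities, not unit interdiction costs.
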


 \begin{proof}
 We give a parameterized reduction from the W[1]-hard MMEI with $b$ as parameter. Note that this parametrization remains W[1]-hard
on bipartite graphs. 
 Let a bipartite graph $G=(X,Y,E)$ be an instance of MMEI. We create an instance $G'$ for $s$-$t$ {\sc Flow Edge Interdiction}
 in the following way: Initialize~$G'$ with $G$ such that each edge has unit interdiction cost
 and flow capacity and each arc is directed from vertex in~$X$ to the one in $Y$. Add 
 two new vertices~$s$ and $t$ to $G$. 
 Now, we arcs with interdiction cost $b+1$ from~$s$ to each vertex in~$X$; similarly, arcs with interdiction cost $b+1$ directed from 
 each vertex in $Y$ to $t$ are added. Let the set of the 
 arcs added in this way to $G$ be $Q$ and each arc in $Q$ has unit flow capacity.
 With this construction we can show that~$G$ has a yes answer to MMEI with $b$ total budget and maximum matching with weight at most $m$
 allowed in the resulting graph  iff $G'$ has yes answer to the $s$-$t$ {\sc FEI} with $b$ total budget and maximum flow allowed in the resulting graph
 at most $r=m$.

 The key argument is that only the arcs in $G'-Q$ will belong to an optimal solution of
 $s$-$t$ {\sc FEI}. 
  Hence, the amount of the $s$-$t$ flow in the resulting graph is equivalent to the corresponding matching in $G'-Q$.  \qed
 \end{proof}

\section{Partial Problems on Bipartite Graphs}
From the proof of Theorem \ref{thm:mmeibpg}, we can already observe some equivalent relation between edge interdiction problems and partial covering problems. 
In the following, we introduce a new edge interdiction problem and prove
its parameterized complexity by relating it to a partial covering problem.

 \begin{description}\itemsep-1pt
  \item {\sc Minimum Maximal Matching Edge Interdiction (MMMEI)}  
  \item {Input:} A simple graph $G=(V,E)$, and an integer interdiction budget $b \geq 0$ and an integer $r$.  
  \item {  Output:} Is there a  subset $I \subseteq E$ with $|I| \leq b$ such that $\lambda(G - I) \leq r$ ?  
 \end{description}

The corresponding partial covering problem is the so called {\sc Partial Edge Dominating set} ($k$-PEDS)
problem which is defined as follows: 

 \begin{description}\itemsep-1pt
  \item {\sc $k$-Partial Edge Dominating Set ($k$-PEDS)}  
  \item {Input:} A graph $G=(V,E)$ and two positive integers $k$ and $x$.  
  \item {  Output:} Is there a  subset $S \subseteq E$ with $|S| \leq k$ such that at least $x$ edges are dominated by $S$?  
 \end{description}

\begin{lemma}\label{peds} 
 $k$-PEDS with parameter $k$ is W[1]-hard on bipartite graphs.
\end{lemma}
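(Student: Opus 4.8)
The plan is to give a parameterized reduction from the W[1]-hard $k$-{\sc Partial Vertex Cover} problem, mirroring the strategy used in Theorem~\ref{thm:mmeim} but now engineered so that a small \emph{edge} set, rather than a small vertex set, does the dominating. Given an instance $G=(V,E)$ of $k$-PVC with target $x$, I would build a bipartite graph $G'$ on one side consisting of (a copy of) the vertices $V$ and on the other side the edges $E$, inserting a ``membership'' edge between $v \in V$ and $e \in E$ exactly when $v$ is an endpoint of $e$ in $G$; this is the standard vertex--edge incidence bipartite graph, which is bipartite by construction. The intended correspondence is that choosing a vertex $v$ in the PVC solution corresponds to selecting, in $G'$, some distinguished edge incident to the vertex $v$, so that all incidence-edges meeting $v$ (i.e. all edges $e$ of $G$ covered by $v$) become dominated.

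First I would attach to each vertex $v \in V$ a private gadget that lets one edge of $S$ ``activate'' $v$ cheaply: the cleanest device is to add a pendant vertex $v'$ joined only to $v$, and declare the pendant edge $\{v,v'\}$ to be the canonical representative for selecting $v$. Putting $\{v,v'\}$ into the dominating set $S$ then dominates every incidence-edge at $v$, which are precisely the incidences of the $G$-edges covered by $v$. Setting the budget $k$ equal to the PVC parameter and choosing the domination target $x'$ as an appropriate affine function of $x$ (counting the incidence-edges dominated, plus a controlled contribution from the pendant edges themselves), I would argue the two directions. For ($\Rightarrow$), a size-$k$ partial vertex cover $U$ covering $\ge x$ edges of $G$ yields $S=\{\{u,u'\}: u\in U\}$, and each covered $G$-edge $e=\{u,w\}$ with $u\in U$ has its incidence-edge $\{u,e\}$ dominated, giving at least the required count.

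The harder direction, and the main obstacle, is ($\Leftarrow$): I must ensure that an optimal $k$-edge solution $S$ in $G'$ can be \emph{normalized} to consist only of pendant edges, so that it reads back as a genuine vertex selection of size $k$. The danger is that $S$ might pick incidence-edges $\{v,e\}$ or pendant edges on the ``wrong'' side and thereby dominate incidences more efficiently than any vertex set could. To block this I would calibrate the gadget sizes --- for instance by giving each $G$-edge vertex $e$ its own large bundle of pendant neighbours so that dominating an $e$-vertex directly is never more profitable than dominating through its two incident $V$-vertices, and by weighting the target $x'$ so that the pendant edges must be spent on the $V$-side. With such padding, an exchange argument shows any incidence-edge $\{v,e\}\in S$ can be replaced by the pendant edge $\{v,v'\}$ without decreasing the number of dominated edges, after which $S$ corresponds to exactly $k$ chosen vertices dominating $\ge x$ of the $G$-incidences, i.e. covering $\ge x$ edges. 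Getting these counting thresholds to line up exactly, so that the equivalence is tight in both directions, is the delicate part; the reduction is clearly polynomial and preserves the parameter $k$, which is all that W[1]-hardness requires.
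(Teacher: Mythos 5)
There is a genuine gap, and it sits exactly in the counting step you defer as ``the delicate part.'' In your incidence-graph construction, selecting the pendant edge $\{u,u'\}$ dominates the incidence edges $\{u,e\}$ for all $e \ni u$, so a solution $S$ corresponding to a vertex set $U$ dominates $k + \sum_{u \in U} \deg_G(u)$ edges. But $\sum_{u \in U}\deg_G(u) = \mathrm{cov}(U) + |E(G[U])|$: a $G$-edge $e=\{u,w\}$ with \emph{both} endpoints in $U$ is represented by two distinct incidence edges $\{u,e\}$ and $\{w,e\}$ and is counted twice, whereas $k$-PVC counts it once. The discrepancy $|E(G[U])|$ depends on the solution $U$, not merely on $k$ and $x$, so no affine choice of the target $x'$ can make the equivalence tight in both directions: maximizing the number of dominated edges in your $G'$ amounts to maximizing total degree $\sum_{u\in U}\deg_G(u)$, which is solvable in polynomial time by greedily taking the $k$ highest-degree vertices, and hence transfers no hardness. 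The obstruction is structural: in edge domination, the ``witnesses'' for a $G$-edge being covered via $u$ and via $w$ are different edges of $G'$ unless $G'$ literally contains the edge $\{u,w\}$ --- but then $G'$ inherits $G$'s non-bipartiteness, and you cannot instead start from \emph{bipartite} $k$-PVC, since that is FPT (as the paper itself notes and uses in Theorem~\ref{thm:mmeibpg}).

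This is also why the paper reduces from $k$-{\sc Independent Set} rather than $k$-PVC: it builds a bipartite double cover ($v \mapsto v_1,v_2$ with edge $\{v_1,v_2\}$; each $\{u,v\}\in E$ becomes $\{u_1,v_2\}$ and $\{u_2,v_1\}$) and pads every vertex with pendant edges to uniform degree $n$, so that each selected edge dominates \emph{exactly} $2n$ edges and the target $2kn$ forces the $k$ selected edges to have pairwise disjoint dominated sets. Under uniform degrees the partial-domination objective encodes \emph{disjointness} (independence), not coverage --- indeed, once you pad your construction to equalize degrees, $\sum_{u\in U}\deg(u)$ becomes constant and the objective collapses to minimizing overlaps, i.e., an independent-set condition, confirming that PVC is the wrong source problem here. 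Note also that the direction you call harder is, in the paper's reduction, genuinely nontrivial for a different reason: recovering an independent set from an edge solution requires the conflict-cycle analysis and a $2$-coloring argument, machinery your exchange-based normalization does not supply. To repair your proof you would need to abandon the PVC source and redo it along these lines.
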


\begin{proof} 
We give a parameterized reduction from W[1]-hard $k$-{\sc Independent Set}~\cite{Nie06} to $k$-{\sc PEDS}.
Given a graph $G=(V,E)$ as an instance of $k$-{\sc Independent Set}, we create an instance 
$G':=(V_1,V_2,E')$ for $k$-PEDS in the following way: For each vertex $v \in G$, we create
two vertices $v_1$ and $v_2$ and an edge $\{v_1, v_2\}$ in~$G'$. For each edge
$\{u,v\} \in G$, we create two edges $\{u_1, v_2\}$ and $\{u_2, v_1\}$ in $G'$. 
Moreover, for every vertex $v_i \in G'$, we add $n - deg_{G}(v)$ degree-1 neighbors where~$n=|V|$. Now, we show that $k$ edges in $G'$ dominate at least $2kn$ edges 
iff there exists an independent set of size $k$ in $G$.

($\Leftarrow$)
Let $S$ be an independent set of size $k$ in $G$. 
For each vertex $v \in S$, we add the corresponding edge 
$\{v_1, v_2\} \in E'$ to the solution set $S'$ for $k$-PEDS in $G'$.
Given that $S$ is an independent set, for any pair of vertices $u$, $v$ in $S$, the corresponding edges 
$\{u_1, u_2\} \in E'$ and $\{v_1, v_2\} \in E'$ do not dominate any common edge.
Hence, since each edge $e \in S'$ dominates exactly $2n$ edges, the set $S'$ dominates $2kn$ edges.    

($\Rightarrow$)
Now, let $S'$ be a set of $k$ edges in $G'$ which dominate $2kn$ edges. Each edge in $G'$ can dominate at most $2n$ 
edges; hence, no two edges in~$S'$ share a dominated edge. This ensures that the shortest distance
between every two edges in~$S'$ is at least two. Now, we present an algorithm to convert a given solution~$S'$ for PEDS
in $G'$ to a size-$k$ independent set in $G$. For this purpose, first we define a {\it conflict cycle}. 
Let~$S'$ be a solution of PEDS in $G'=(V_1,V_2,E')$ and let~$T \subseteq S'$. We say
that $T$ forms a conflict cycle $C$ in $G'$ if we can construct a cycle in $G'$ containing $T$ and a set $U$ of $|T|$ 
vertices from
$G' - S'$ such that in cycle $C$, between any two edges from $T$ there exists exactly one vertex from $U$. A vertex~$t_j$ can be in $U$ only if~$t_{3-j}$ is contained
 in~$S'$ for $j=\{1,2\}$.

Assume that there exists no conflict cycle with respect to $S'$. 
Then it is easy to get the size-$k$ independent set corresponding to $S'$ in $G$. Construct a graph~$Y$ that represents 
the connectivity relation of edges in $S'$: For each edge $i \in S'$, we create a  vertex $y_i$ in $Y$.
We create an edge between two 
vertices $y_i$ and $y_j$ in $Y$ iff their corresponding edges $i$ and $j$ are separated
by distance exactly 2 in $G'$. Observe that in the absence of {\it conflict cycles} in $G'$, $Y$ is a tree.
Now, we give a procedure to get $S$ from $S'$, given $Y$ is a tree. We start in bottom-up fashion from leaves.
Consider a leaf of $Y$, if $\{x_1,x_2\}$ is the edge corresponding to the leaf, then we add the corresponding vertex $x$ to $S$. 
Let $\{x_1, y_2\}$ be the edge corresponding to the leaf and $v_i$ be the vertex connecting  the leaf to its parent in $T$. 
If $v_i \in V_1$, we add $x$ to $S$, else we add $y$ to $S$. Now, we remove this leave and proceed iteratively for every leaf. We
can observe that, since $T$ is a tree, no conflict will arise during this procedure. The
obtained solution after all vertices in $T$ are processed is an independent set in $G$. 
In the scenario when there exist {\it conflict cycles} in $G'$, we first prove the following claim:

\begin{claim}
If there exist {\it conflict cycles} in $G'$, the corresponding graph $Y$ is a bipartite graph.
\end{claim}
\begin{proof}
Firstly, we observe that $T \subseteq S'$ forms a {\it conflict cycle}, only if there exists no edge $\{x_1, x_2\}$ in $T$. If $\{x_1, x_2\} \in T$, 
no vertex $v_i$ with $i\in \{1,2\} $ adjacent to~$x_1$ or $x_2$ can be in $U$ as $v_{(3-i)} \notin V(S') $. We can further observe that
since $G'$ is bipartite, in the {\it conflict cycle}, the vertices will alternate between $V_1$ and $V_2$. Moreover, in the cycle, every two consecutive
vertices from $U$ will also alternate between $V_1$ and $V_2$. Now, as the edges in $Y$ are analogous to vertices in $U$, each cycle in $Y$ corresponding
to a {\it conflict cycle} in $S'$ is of even length.       \qed
\end{proof}
 
For all vertices in $T$ which do not belong to any cycle, we can obtain the corresponding vertices in $S$
in the bottom up fashion recursively as for the case that $T$ is tree. Now after all vertices not belonging to any cycles are dealt with, let the
resulting $T$ be $T'$, where only even cycles remain.
 Since $T'$ is bipartite, it is 2-colorable. We color $T'$ with two colors, say black and white. If a vertex in~$T'$ corresponding to $\{x_1, y_2\}$
is black, we add $x$ to $S$, else $y$. Since the vertices in $U$ alternate between $V_1$ and $V_2$, this resolves all conflicts and gives a valid independent set $S$.  \qed

\end{proof}

\begin{theorem}
{\sc Minimum Maximal Matching Edge Interdiction} is W[1]-hard with parameter $b$ even on bipartite graphs. 
\end{theorem}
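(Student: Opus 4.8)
The plan is to prove the theorem through the equivalence between MMMEI and $k$-PEDS and then to supply the right parameterized source. First I would pin down the equivalence on a fixed graph. Since a maximal matching is exactly a matching that dominates every edge, a minimum maximal matching is an edge dominating set; conversely, by the classical fact that the minimum maximal matching and the minimum edge dominating set of a graph have the same size, any edge dominating set of size $s$ can be replaced by a maximal matching of size at most $s$. Using this, I would show that MMMEI$(G,b,r)$ is a yes-instance if and only if $G$ has an edge set of size at most $r$ dominating at least $|E|-b$ edges. For the forward direction, a minimum maximal matching $M$ of $G-I$ has $|M|\le r$ and dominates all $\ge|E|-b$ surviving edges. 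For the backward direction, given a dominating set $D$ with $|D|\le r$, I set $I$ to be the edges of $G$ that $D$ fails to dominate; then $|I|\le b$, $D\subseteq E(G-I)$, and $D$ is an edge dominating set of $G-I$, so $\lambda(G-I)\le|D|\le r$. Hence MMMEI$(G,b,r)$ is exactly $k$-PEDS$(G,r,|E|-b)$, and the interdiction budget $b$ becomes the number of undominated edges.

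Under this correspondence, Lemma \ref{peds} only yields W[1]-hardness for the PEDS solution-size parameter, which here is $r$; to reach the parameter $b$ I must instead show that $k$-PEDS on bipartite graphs is W[1]-hard when parameterized by the number of undominated edges. I would obtain this from MMEI parameterized by $b$, which is W[1]-hard on bipartite graphs with unit weight and cost. By K\"onig's theorem, MMEI on a bipartite graph $H$ asks for $m$ vertices covering all but $b$ edges, that is, it is Partial Vertex Cover with the number of uncovered edges as parameter, so the remaining task is a parameterized reduction from covering-all-but-$b$-edges-by-vertices to dominating-all-but-$O(b)$-edges-by-edges. The gadget I have in mind gives every vertex of $H$ a port together with a private selector edge and represents each original edge $\{u,w\}$ by an edge joining the ports of $u$ and $w$; putting the ports of the two sides of $H$ on opposite sides keeps everything bipartite. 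Choosing the selector of $v$ covers the port of $v$ and hence dominates exactly the representatives of the edges at $v$, so a partial vertex cover of deficit $b$ turns into a partial edge domination of the same deficit.

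The main obstacle is making this gadget faithful in both directions while keeping the number of undominated edges bounded by a function of $b$. Two features of edge domination work against the reduction. First, a selected edge dominates the edges at \emph{both} of its endpoints, so edge domination is inherently ``two-for-one'' relative to vertex covering; if the evader picks representative edges rather than selectors it can cover two ports at once and cheat the backward direction. Second, the selector edges attached to the vertices that are \emph{not} chosen are themselves threatened with being left undominated, and there are up to $|V(H)|$ of them, far more than $b$. The construction must therefore be engineered so that every selector (and every other auxiliary edge) is automatically dominated and so that using a representative edge is never strictly better than using the intended selector, thereby neutralizing the two-for-one effect. Checking that the final instance is bipartite, that its undominated count equals the covering deficit up to an additive term depending only on $b$, and that the whole construction is an FPT reduction is where the real effort goes; the equivalence of the first paragraph then transfers the hardness to MMMEI with parameter $b$.
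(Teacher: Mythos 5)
Your opening equivalence is exactly the paper's: MMMEI$(G,b,r)$ holds iff $G$ has at most $r$ edges dominating at least $|E|-b$ edges, via the classical fact that a minimum edge dominating set can be converted in polynomial time into a minimum maximal matching of the same size. Where you diverge is in what you conclude from it, and your diagnosis of the parameter bookkeeping is correct and important: under this correspondence the PEDS solution size $k$ maps to $r$, while the interdiction budget $b$ maps to the number of \emph{undominated} edges. Lemma \ref{peds} proves hardness of PEDS parameterized by $k$, so invoking it (as the paper does) transfers W[1]-hardness of MMMEI in the parameter $r$, not $b$; indeed, in the instances produced in the proof of Lemma \ref{peds} the undominated count $|E'|-2kn$ grows with $n$, so $b$ is unbounded there. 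What the theorem as stated would require is the PEDS analogue of Corollary \ref{coro:pvc} --- hardness of partial edge domination parameterized by the number of undominated edges --- which is precisely the missing piece you identify and attempt to supply.

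However, your proposal does not actually supply it: the reduction from MMEI with parameter $b$ to PEDS with the undominated-edge parameter is left as a sketch, and you yourself name the two points where it can fail --- a representative edge dominates the representatives incident to \emph{both} of its ports (the ``two-for-one'' effect, which breaks the backward direction), and the selector edges of the up to $|V(H)|$ non-chosen vertices must all be automatically dominated or the parameter blows up --- without constructing a gadget that neutralizes them or proving faithfulness in either direction. As it stands, then, the proposal is a sound critique plus a plan, not a proof: the theorem's claim for parameter $b$ is established neither by the paper's one-line appeal to Lemma \ref{peds} (which yields parameter $r$) nor by your unfinished gadget. To complete your route you would need to exhibit the construction explicitly, verify bipartiteness, and prove that the number of undominated edges in the target instance is bounded by a function of $b$ and equals the covering deficit up to a controlled additive term.
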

\begin{proof}

For a bipartite graph $G=(X,Y,E)$, we show that there is a set $S \subseteq E$ with $|S|=k$ and dominating $x$ edges, iff there is a set $I \subseteq E$ with $|I|=|E|-x$ and $\lambda (G-I) = m =k$.
The theorem then follows from Lemma \ref{peds}.
Let $I$ be the set of edges not dominated by $S$, $|I|=|E|-x$. Clearly, $S$ is the minimum edge dominating set of $G - I$. It is well-known that the size of the minimum edge 
dominating set of a graph is equal to the size of its
minimum independent edge dominating set. In fact, given a minimum edge dominating set $F$ of $G-I$ we can construct in polynomial time a minimum independent edge dominating set of $G-I$
 \cite{Yannaeds}. Moreover, a minimum independent edge dominating set is also a minimum maximal matching of $G-I$. Hence, $G-I$ has a minimum maximal matching of size $m$. 
The reverse direction can be shown similarly. \qed
\end{proof}

We can also use the equivalence relation between edge interdiction problems and partial covering problems to prove hardness results of partial covering problems.
 
\begin{corollary}\label{coro:pvc} 
 [*]\footnote{ Due to space limitations, proofs of results marked [*] are given in Appendix} $k$-PVC on bipartite graphs is W[1]-hard with respect to the number of uncovered edges.
\end{corollary}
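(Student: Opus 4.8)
The plan is to reuse, in the opposite direction, the equivalence already established in the proof of Theorem~\ref{thm:mmeibpg}, and to give a parameterized reduction from MMEI parameterized by $b$ to $k$-PVC on bipartite graphs parameterized by the number of uncovered edges. Recall that MMEI with respect to $b$ is W[1]-hard even on bipartite graphs with unit edge weights and unit interdiction costs (the reduction from {\sc Clique} in \cite{abs-0804-3583} is parameterized). The point is that the yes/no equivalence in Theorem~\ref{thm:mmeibpg} is exactly the correspondence we need, provided we read it starting from the W[1]-hard side and track carefully which quantity is the parameter.

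Concretely, let $(G,b,m)$ be an instance of MMEI on a bipartite graph $G=(X,Y,E)$ with unit edge weights and unit interdiction costs. I would build the $k$-PVC instance on the \emph{same} graph $G$, setting $k:=m$ and $x:=|E|-b$; no vertices or edges are added or removed, so the construction is trivially polynomial and preserves bipartiteness. Applying Theorem~\ref{thm:mmeibpg} with $|E|-x=b$ and $k=m$, the graph $G$ admits a partial vertex cover of size at most $k$ covering at least $x$ edges if and only if there is a set $I\subseteq E$ with $|I|\le b$ and $\nu(G-I)\le m$. Hence the constructed $k$-PVC instance is a yes-instance precisely when the MMEI instance is.

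It then remains to verify the parameter bookkeeping, which is really the heart of the argument. The parameter of the target problem is the number of uncovered edges, namely $|E|-x$; by the choice $x=|E|-b$ this equals $b$, the parameter of the source MMEI instance. The new parameter is therefore bounded by (indeed equal to) the old one, so the map is a genuine parameterized reduction, and W[1]-hardness of MMEI with respect to $b$ transfers to $k$-PVC on bipartite graphs with respect to the number of uncovered edges.

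The main subtlety — and the only place where care is needed — is precisely this identification of the parameter together with the direction of the reduction. The equivalence of Theorem~\ref{thm:mmeibpg} is symmetric as a statement about yes/no instances, but to derive hardness one must run it from the W[1]-hard side and then confirm that it is the number of uncovered edges, and not $k$ or $x$ themselves, that stays bounded by the source parameter. Since $|E|-x$ coincides with $b$, this verification is immediate and no further combinatorial work is required; all the substantive content is already contained in the equivalence proved for Theorem~\ref{thm:mmeibpg}.
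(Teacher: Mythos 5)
Your proposal is correct and matches the paper's own proof in essence: the paper likewise reduces from MMEI on bipartite graphs with unit weights and costs, parameterized by $b$, identifying the interdicted edge set $I$ with the set of uncovered edges so that the parameter (number of uncovered edges) equals $b$. The only cosmetic difference is that you invoke the equivalence already stated in the proof of Theorem~\ref{thm:mmeibpg}, while the paper briefly re-derives it, making explicit the appeal to K\"onig's theorem for the direction from $\nu(G-I)\le k$ to a size-$k$ cover of $G-I$.
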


Finally, we study another bipartite variant of the {\sc Partial vertex Cover} problem, which could be of independent interest. This variant is called ~{$(k_1,k_2)$-{\sc Partial Vertex Cover} ($(k_1,k_2)$-PVC)},
 where given   
a bipartite graph~$G=(X,Y,E)$ and~$k_1,~k_2~\in~\mathbb{N}$, one asks for  
a subset~$S \subseteq V$ with~$|S \cap X|=k_1$ and~$ |S \cap Y|= k_2$ that maximizes the number of edges in~$G$ with at 
                    least one endpoint in~$S$.  
In contrast to the fixed parameter tractability of~$k$-PVC on bipartite graphs, we prove that~$(k_1,k_2)$-PVC is W[1]-hard with respect
to~$k_1$ and~$k_2$. To this end, we prove first a so-called sparsest subgraph problem which is W-hard and reduce it to~$(k_1,k_2)$-PVC. This problem is
called~$(k_1,k_2)$-{\sc Sparsest Subgraph}~($(k_1,k_2)$-SS), where given a bipartite graph $G=(X,Y,E)$ and  $k_1$, $k_2 \in \mathbb{N}$, one
asks for a subset $S \subseteq V$ with $|S \cap X|=k_1$ and $ |S \cap Y|= k_2$ that minimizes the number of edges in $G[S]$.

\begin{lemma}\label{thm:kss}
[*] $(k_1, k_2)$-SS is W[1]-hard with respect to $k_1$ and $k_2$.
\end{lemma}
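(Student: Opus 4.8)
The plan is to give a parameterized reduction from the W[1]-hard {\sc Clique} problem (parameterized by the clique size $k$)~\cite{Nie06} to $(k_1,k_2)$-SS. The guiding observation is that a sparsest-subgraph query on a bipartite graph is, after taking the bipartite complement, a densest-subgraph query, and that the bipartite densest subgraph on the vertex--edge incidence graph of a graph naturally encodes {\sc Clique}. So although $(k_1,k_2)$-SS asks to \emph{minimize} induced edges, the minimization translates into a maximization of incidence pairs, which is exactly where the clique structure gets forced.

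Concretely, given an instance $(H=(V,E),k)$ of {\sc Clique}, I would build a bipartite graph $G'=(X,Y,E')$ with $X:=V$ and $Y:=E$, where I place an edge $\{v,e\}\in E'$ exactly when $v$ is \emph{not} an endpoint of $e$ in $H$; that is, $G'$ is the bipartite complement of the incidence graph of $H$. I then set $k_1:=k$, $k_2:=\binom{k}{2}$, and ask whether there is a selection $S$ with $|S\cap X|=k_1$, $|S\cap Y|=k_2$ and at most $r:=k\binom{k}{2}-2\binom{k}{2}$ edges in $G'[S]$. Since $k_1+k_2=k+\binom{k}{2}$ depends only on $k$ and $G'$ is clearly constructible in polynomial time, this is a valid parameterized reduction for the combined parameter $(k_1,k_2)$.

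The core of the argument is a double-counting step. For any selection $S$, the number of edges of $G'[S]$ equals the total number of $X$--$Y$ pairs, namely $k_1 k_2=k\binom{k}{2}$, minus the number of incidence pairs inside $S$, i.e. minus $\sum_{e\in S\cap Y}|e\cap(S\cap X)|$. Each chosen edge-vertex contributes at most $2$ to this last sum, so it is bounded by $2\binom{k}{2}$, giving $|E(G'[S])|\geq r$; equality forces every chosen edge-vertex to have both endpoints among the $k$ chosen $X$-vertices. As $k$ vertices span at most $\binom{k}{2}$ edges, having $\binom{k}{2}$ fully internal edge-vertices is possible if and only if the $k$ chosen vertices of $H$ are pairwise adjacent, i.e. form a clique; conversely a $k$-clique of $H$ together with its $\binom{k}{2}$ edges realizes $|E(G'[S])|=r$. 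Hence $H$ has a $k$-clique iff $G'$ admits a selection with at most $r$ induced edges, which establishes the W[1]-hardness.

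The main thing to get right is this equivalence rather than any algorithmic difficulty: one must argue both that the complement bookkeeping is exact and that attaining the extremal value $2\binom{k}{2}$ of the incidence sum is achievable only by a clique, so that no ``spread out'' choice of edge-vertices can match it. A minor point to confirm beforehand is that the instance has at least $\binom{k}{2}$ edges, so that $k_2$ edge-vertices actually exist; otherwise $H$ cannot contain a $k$-clique and the reduction may safely output a trivial no-instance.
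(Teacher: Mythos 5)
Your proof is correct and takes essentially the same route as the paper: the identical construction (the bipartite complement of the incidence graph, with $k_1=k$, $k_2=\binom{k}{2}$, and target $k_1k_2-2k_2$) and the same clique-forcing equivalence. If anything, your explicit double-counting of incidence pairs makes the converse direction more rigorous than the paper's terse ``biclique with $2k_2$ edges missing'' statement, and your observation about instances with fewer than $\binom{k}{2}$ edges is a sensible (if minor) safeguard the paper omits.
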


 \begin{theorem}\label{thm:kkpvc}
[*] $(k_1, k_2)$-PVC is W[1]-hard with respect to $k_1$ and $k_2$. 
 \end{theorem}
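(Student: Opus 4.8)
The plan is to give a parameterized reduction from the W[1]-hard $(k_1,k_2)$-SS problem (Lemma~\ref{thm:kss}) to $(k_1,k_2)$-PVC, keeping the parameters $k_1$ and $k_2$ unchanged, in the same covering-versus-structure spirit as the equivalence exploited in Theorem~\ref{thm:mmeibpg}. The reduction rests on one elementary identity: for a bipartite graph $H=(X,Y,E_H)$ and a vertex set $S$ with $S_X=S\cap X$ and $S_Y=S\cap Y$, the number of edges covered by $S$ equals $\sum_{v\in S}\deg_H(v)-e_H(S)$, where $e_H(S)$ is the number of edges of $H$ with both endpoints in $S$ (each such edge is double counted in the degree sum). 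If every $X$-vertex had the same degree $d_X$ and every $Y$-vertex the same degree $d_Y$, then $\sum_{v\in S}\deg_H(v)=k_1 d_X+k_2 d_Y$ would be a constant, so maximizing covered edges would coincide exactly with minimizing $e_H(S)$, i.e. with solving $(k_1,k_2)$-SS. Thus the core of the reduction is to turn an arbitrary $(k_1,k_2)$-SS instance into an (almost) biregular PVC instance without disturbing its optimum.

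Given an SS instance $G=(X,Y,E)$, I would build $G'$ as follows. Fix target degrees $d_X,d_Y$ that exceed both the maximum degree of $G$ on the respective side and the quantity $k_1k_2+1$; these can be chosen polynomial in the input size. For each original (\emph{real}) vertex $u\in X$ I attach $d_X-\deg_G(u)$ fresh degree-$1$ neighbours on the $Y$-side, and for each real vertex $w\in Y$ I attach $d_Y-\deg_G(w)$ fresh degree-$1$ neighbours on the $X$-side; call these padding vertices. Then every real $X$-vertex has degree exactly $d_X$, every real $Y$-vertex degree exactly $d_Y$, and every padding vertex has degree $1$. Crucially, no edge between two real vertices is ever added, so the subgraph of $G'$ induced by the real vertices is exactly $G$. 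I pass the same $k_1,k_2$ to PVC, so this is a valid parameterized reduction once the equivalence below is established.

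The correctness step I would carry out is to show that an optimal $(k_1,k_2)$-PVC solution of $G'$ selects only real vertices. Using covered $=\sum_{v\in S}\deg_{G'}(v)-e_{G'}(S)$ together with $e_{G'}(S)\ge 0$, any solution that contains a padding vertex has degree sum at most $k_1 d_X+k_2 d_Y-(\min\{d_X,d_Y\}-1)$, since replacing a degree-$d_X$ (or degree-$d_Y$) real vertex by a degree-$1$ padding vertex loses at least $\min\{d_X,d_Y\}-1$; hence it covers at most that many edges. On the other hand, any all-real selection covers $k_1 d_X+k_2 d_Y-e_G(S)$ edges with $e_G(S)\le k_1k_2$, and by the choice $\min\{d_X,d_Y\}>k_1k_2+1$ every all-real solution strictly beats every padding-using one. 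Therefore the PVC optimum equals $k_1 d_X+k_2 d_Y-\mathrm{OPT}_{SS}$, and $G$ admits a $(k_1,k_2)$-selection inducing at most $t$ edges iff $G'$ has a $(k_1,k_2)$-PVC solution covering at least $k_1 d_X+k_2 d_Y-t$ edges, yielding the W[1]-hardness of $(k_1,k_2)$-PVC.

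The hard part is precisely forcing the PVC optimum onto the real vertices: naive pendant padding creates independent padding sets that cover few but conflict-free edges, and biregularity alone does not exclude them, because an all-padding selection induces no edges and so looks artificially ``sparse''. Inflating the target degrees $d_X,d_Y$ beyond the maximum possible number $k_1k_2$ of induced edges is exactly what defeats this, and the bulk of a careful writeup is the degree-counting inequality confirming that even a single padding vertex is never profitable. A secondary point to verify is that $d_X,d_Y$ stay polynomially bounded, so that the construction runs in polynomial time while the parameters $k_1,k_2$ are preserved.
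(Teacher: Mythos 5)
Your proposal is correct and follows essentially the same route as the paper: both pad each real vertex with pendant (degree-$1$) neighbours to make all real vertices (bi)regular, then exploit the identity that the number of covered edges equals the degree sum of the solution minus the number of induced edges, so that maximizing coverage coincides with minimizing induced edges, i.e.\ with $(k_1,k_2)$-SS. The one genuine refinement on your side is the quantitative exclusion of padding vertices via $\min\{d_X,d_Y\} > k_1k_2+1$, which is more careful than the paper's brief exchange argument (``take its neighbour instead''), an argument that as stated would even violate the side-cardinality constraints, since a pendant's neighbour lies on the opposite side of the bipartition.
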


\section{Outlook}
We proved that $b$-MVE is FPT with $b$  as parameter for the case of edge weights 0 or 1. The case with integer positive weights remains open. Another open
question is the complexity of MMEI with $b$ and $m$ as parameters  and integer edge weights. Moreover, structural parameters like treewidth could be a promising alternative for parameterizing
interdiction problems. Finally, the vertex interdiction problems  have been studied from the viewpoints of classical complexity and approximation algorithms, but seem unexplored from the parameterized complexity perspective.
\bibliographystyle{abbrv}	 
\bibliography{interdict}

\begin{thebibliography}{10}

\bibitem{AminiFS11}
O.~Amini, F.~V. Fomin, and S.~Saurabh.
\newblock Implicit branching and parameterized partial cover problems.
\newblock {\em J. Comput. Syst. Sci.}, 77(6):1159--1171, 2011.

\bibitem{AroraK06}
S.~Arora and G.~Karakostas.
\newblock A 2 + $\epsilon$ approximation algorithm for the $k$-mst problem.
\newblock {\em Math. Program.}, 107(3):491--504, 2006.

\bibitem{n87}
N.~Assimakopoulos.
\newblock A network interdiction model for hospital infection control.
\newblock {\em Bio. Med}, 17(6):413--422, 1987.

\bibitem{Bar-Yehuda01}
R.~Bar-Yehuda.
\newblock Using homogeneous weights for approximating the partial cover
  problem.
\newblock {\em J. Algorithms}, 39(2):137--144, 2001.

\bibitem{gupta}
M.~Dinitz and A.~Gupta.
\newblock Packing interdiction and partial covering problems.
\newblock In {\em IPCO}, pages 157--168, 2013.

\bibitem{DowneyEFPR03}
R.~G. Downey, V.~Estivill-Castro, M.~R. Fellows, E.~Prieto, and F.~A. Rosamond.
\newblock Cutting up is hard to do: the parameterized complexity of k-cut and
  related problems.
\newblock {\em Electr. Notes Theor. Comput. Sci.}, 78:209--222, 2003.

\bibitem{FominLRS11}
F.~V. Fomin, D.~Lokshtanov, V.~Raman, and S.~Saurabh.
\newblock Subexponential algorithms for partial cover problems.
\newblock {\em Inf. Process. Lett.}, 111(16):814--818, 2011.

\bibitem{forcade}
R.~Forcade.
\newblock Smallest maximal matching in the graph of the d-dimensional cube.
\newblock {\em J. Combinatorial Theory Ser. B}, 14(14):153--156, 1973.

\bibitem{FredericksonS99}
G.~N. Frederickson and R.~Solis-Oba.
\newblock Increasing the weight of minimum spanning trees.
\newblock {\em J. Algorithms}, 33(2):244--266, 1999.

\bibitem{GuoNW05}
J.~Guo, R.~Niedermeier, and S.~Wernicke.
\newblock Parameterized complexity of generalized vertex cover problems.
\newblock In {\em WADS}, pages 36--48, 2005.

\bibitem{HsuM91}
W.-L. Hsu and T.-H. Ma.
\newblock Substitution decomposition on chordal graphs and applications.
\newblock In {\em ISA}, pages 52--60, 1991.

\bibitem{KT11}
K.~Kawarabayashi and M.~Thorup.
\newblock The minimum k-way cut of bounded size is fixed-parameter tractable.
\newblock In {\em FOCS}, pages 160--169, 2011.

\bibitem{Liang01}
W.~Liang.
\newblock Finding the k most vital edges with respect to minimum spanning trees
  for fixed k.
\newblock {\em Discrete Applied Mathematics}, 113(2-3):319--327, 2001.

\bibitem{LiangS97}
W.~Liang and X.~Shen.
\newblock Finding the k most vital edges in the minimum spanning tree problem.
\newblock {\em Parallel Computing}, 23(13):1889--1907, 1997.

\bibitem{Nie06}
R.~Niedermeier.
\newblock {\em Invitation to Fixed Parameter Algorithms (Oxford Lecture Series
  in Mathematics and Its Applications)}.
\newblock {Oxford University Press, USA}, March 2006.

\bibitem{pan}
F.~Pan and A.~Schild.
\newblock Interdiction problems on planar graphs.
\newblock In {\em APPROX-RANDOM}, pages 317--331, 2013.

\bibitem{philip}
C.~A. Phillips.
\newblock The network inhibition problem.
\newblock In {\em Proceedings of the twenty-fifth annual ACM symposium on
  Theory of computing}, STOC '93, pages 776--785, New York, NY, USA, 1993. ACM.

\bibitem{salmeron_09}
J.~Salmeron, K.~Wood, and R.~Baldick.
\newblock Worst-case interdiction analysis of large-scale electric power grids.
\newblock {\em IEEE Transactions on Power Systems}, 24(1):96--104, Feb. 2009.

\bibitem{Wood95}
A.~Washburn and R.~K. Wood.
\newblock Two-person zero-sum games for network interdiction.
\newblock 43(2):243--251, 1995.

\bibitem{ghare71}
R.~K. Wood.
\newblock Optimal interdiction policy for a flow network.
\newblock 18:37--45, 1971.

\bibitem{Wood93}
R.~K. Wood.
\newblock Deterministic network interdiction.
\newblock 17(2):1--18, 1993.

\bibitem{Yannaeds}
M.~Yannakakis and F.~Gavril.
\newblock Edge dominating sets in graphs.
\newblock {\em SIAM J. Appl. Math}, 38(3):364--372, 1980.

\bibitem{abs-0804-3583}
R.~Zenklusen.
\newblock Matching interdiction.
\newblock {\em CoRR}, abs/0804.3583, 2008.

\bibitem{Zenklusen10a}
R.~Zenklusen.
\newblock Matching interdiction.
\newblock {\em Discrete Applied Mathematics}, 158(15):1676--1690, 2010.

\end{thebibliography}
\newpage
\appendix
\section{Appendix}

\subsection{Omitted proof of Corollary \ref{coro:pvc}}
\begin{proof}
Again, we prove this theorem by establishing an equivalence relation between MMEI and PVC. For a bipartite graph $G=(X,Y,E)$, we show that there is a set $S \subseteq V$ with $|S| \leq k$
and covering $|E|-x$ edges iff there is a set $I \subseteq E$ with $|I|= x$ and $m = \nu(G-I)=k$. The theorem follows from W-hardness of MMEI with unit interdiction cost and weight on bipartite graphs
with parameter~$b$~\cite{abs-0804-3583}.

Let $I$ be the set of uncovered edges of $S$.
 We can observe that since $G-I$ have a minimum vertex cover of $k$, from K\"{o}nig's theorem, $G-I$ has a maximum matching of size at most $k$. 
The reverse direction is obvious.    \qed

\end{proof}

\subsection{Omitted proof for Lemma \ref{thm:kss}}

\begin{proof}
We give a parameterized reduction from W[1]-hard {\sc $k$-Clique} \cite{Nie06}. Given an instance $G=(V,E)$ of $k$-{\sc Clique},
we create an instance~$G'=(X,Y,E')$ of~$(k_1,k_2)$-SS as follows: For each 
vertex~$v_i \in V$, we create a vertex~$v_i$ in~$X$ and for each edge $\{v_i,v_j\} \in E$, we create a vertex $v_{ij} \in Y$. For each
vertex~$v_k \in X$, we create an edge $\{v_k, v_{ij}\} \in E'$ iff $k \neq \{i,j\}$. 
Now, we show that $G$ has a~$k$-{\sc Clique} iff there exists a $(k_1,k_2)$-SS in $G'$ which induces at most $k_1k_2-2k_2$
edges where $k_1=k$ and $k_2= {k \choose 2}$.

$(\Rightarrow)$ 
Let $S \subseteq V$ be the size-$k$ clique in $G$. Now, for each vertex $v_i \in S$, we add the 
corresponding vertex $v_i \in X $ and the vertices in $Y$ corresponding to the edges in $G[S]$ to $S'$.    
We can observe that exactly $k_1=k$ vertices are selected from $X$ and $k_2= {k \choose 2}$ vertices are selected from $Y$. Moreover, since $S$ is a clique,~$G'[S']$ is
a $(k_1, k_2)$-biclique in $G'$ minus the edges corresponding to vertex-edge connections in $G[S]$. 
There are exactly $2k_2$ missing edges since $G[S]$ is a size-$k$ clique. Hence, $G'[S']$ has $k_1k_2-2k_2$ edges.

$(\Leftarrow)$
Let $S' \subseteq V'$ be a $\{k_1,k_2\}$-SS for $G'$. Moreover, let $S' \cap X= S'_{1}$ and~$S' \cap Y=S'_{2}$. Since there are at most $k_1k_2-2k_2$ edges in $G'[S']$, 
$G'[S']$ is a $(k_1, k_2)$-biclique in $G'$ with $2k_2$ edges missing. 
This means that $k_2={k \choose 2}$ edges are between $k_1=k$ vertices in $G$, which is possible
iff there exists a size-$k$ clique in $G$.    \qed   

\end{proof}

 \subsection{Omitted proof of Theorem \ref{thm:kkpvc}}
\begin{proof}
 In Lemma \ref{thm:kss}, we proved that~$(k_1,k_2)$-SS is W[1]-hard with respect to~$k_1$ and~$k_2$. 
 Now, we give parameterized reduction from 
 this problem to~$(k_1,k_2)$-PVC. 
 Given an instance~$G=(X,Y,E)$ of~$(k_1,k_2)$-SS with vertex set of size~$n$, we create an instance~$G'=(X',Y', E')$
 of~$(k_1,k_2)$-PVC from~$G$ by adding~$n-\text{deg}_{G}(v)$ degree-1 vertices
 to each vertex of~$G$. Let~$L$ be the set of all {degree-1} vertices added. We need to show that a~$(k_1,k_2)$-SS
 for~$G$ which induces minimum number of edges, is equivalent to~$(k_1,k_2)$-PVC for~$G'$ 
 which covers the maximum number of edges in~$G'$. We can assume that no degree-1 vertex is part of~$(k_1,k_2)$-PVC 
 solution as it is always better to take its neighbor into the solution. Each vertex~$v$ in~$X' \cup Y' \setminus L$ has
 the same degree~$n$. Hence, it is easy to see that a $(k_1,k_2)$-PVC set induces the minimum number of edges as the 
 induced edges are shared 
 edges between two vertices of the PVC-set. With this argument we can show that $(k_1,k_2)$-SS
 for~$G$ is equivalent to $(k_1,k_2)$-PVC for $G'$ .   \qed 
\end{proof}

\end{document}